% LaTeX template for MLSP papers. To be used with:
%   * mlspconf.sty - ICASSP/ICIP LaTeX style file adapted for MLSP, and
%   * IEEEbib.bst - IEEE bibliography style file.
% --------------------------------------------------------------------------
\documentclass[9pt]{extarticle}
\usepackage[utf8]{inputenc}
\usepackage{amsmath,graphicx,mlspconf}
\usepackage{amsfonts}
\usepackage{amssymb}
\usepackage{amsthm}
\usepackage{float}
\usepackage[linkcolor=mar,colorlinks=true]{hyperref}
\hypersetup{colorlinks,breaklinks,
            linkcolor=[rgb]{0.5,0,0},
            citecolor=[rgb]{0.1484375,0.1484375,0.53125}}
\usepackage{cleveref}
\usepackage{cite}
\usepackage{enumitem}
\usepackage[ruled,vlined,linesnumbered]{algorithm2e}
\usepackage{algpseudocode}
%spacing
\setlength{\textfloatsep}{0.1cm}
\setlength{\floatsep}{0.1cm}
\usepackage[skip=0pt]{caption}
\usepackage{bm}
\newtheorem{theorem}{Theorem}
\newtheoremstyle{reduced}
  {6pt} % Space above
  {6pt} % Space below
  {} % Body font
  {} % Indent amount
  {\bfseries} % Theorem head font
  {.} % Punctuation after theorem head
  {.5em} % Space after theorem head
  {} % Theorem head spec (can be left empty, meaning `normal')
\newtheorem{lemma}[theorem]{Lemma}
\newtheorem{proposition}[theorem]{Proposition}
\theoremstyle{reduced}
\newtheorem{assumption}{Assumption}

\makeatletter
\renewenvironment{proof}[1][\proofname]{\par
  \vspace{-\topsep}% remove the space after the theorem
  \pushQED{\qed}%
  \normalfont
  \topsep3pt \partopsep3pt % no space before
  \trivlist
  \item[\hskip\labelsep
        \itshape
    #1\@addpunct{.}]\ignorespaces
}{%
  \popQED\endtrivlist\@endpefalse
  \addvspace{2pt plus 0pt} % some space after
}
\makeatother

{%
   \end{oldthebibliography}%
}

\allowdisplaybreaks

\usepackage{etoolbox}

\makeatletter
% Remove right hand margin in algorithm
\patchcmd{\@algocf@start}% <cmd>
  {-1.5em}% <search>
  {0pt}% <replace>
  {}{}% <success><failure>
\makeatother

% Copyright notices.
% ------------------
% Select one of the four copyright notices below. Only required for the camera-ready paper submission.
% 
% * For papers in which all authors are employed by the US government:
% \copyrightnotice{U.S.\ Government work not protected by U.S.\ copyright}

% % * For papers in which all authors are employed by a Crown government (UK, Canada, and Australia):
% \copyrightnotice{978-1-7281-6662-9/20/\$31.00 {\copyright}2020 Crown}

% % * For papers in which all authors are employed by the European Union:
% \copyrightnotice{978-1-7281-6662-9/20/\$31.00 {\copyright}2020 European Union}

% % * For all other papers:
\copyrightnotice{978-1-7281-6338-3/21/\$31.00 {\copyright}2021 IEEE}

% Header
\toappear{2021 IEEE International Workshop on Machine Learning for Signal Processing, Oct.\ 25--28, 2021, Gold Coast, Australia}

% Example definitions.
% --------------------
%\def\x{{\mathbf x}}
%\def\L{{\cal L}}

% Title.
% ------
\title{Model-Free Learning of Optimal Deterministic Resource Allocations in Wireless Systems via Action-Space Exploration}
%
% Double-blind peer review.
% -------------------------
% Anonymize your paper for the double-blind peer-review process using the 
% following author and affiliation.
\name{\normalsize  Hassaan Hashmi and Dionysios S. Kalogerias}
\address{\normalsize Department of Electrical Engineering, Yale University, New Haven, USA\\
\normalsize \tt \{hassaan.hashmi,dionysis.kalogerias\}@yale.edu}

% Single address.
% ---------------
%\name{Author(s) Name(s)\thanks{Thanks to XYZ agency for funding.}}
%\address{Author Affiliation(s)}

% For example:
% ------------
%\address{%
%    School \\
%    Department \\
%    Address
%}
%
% Two addresses.
% --------------
%\twoauthors{%
%    A. Author-one, B. Author-two\sthanks{Thanks to XYZ agency for funding.}
%}{%
%    School A-B \\
%    Department A-B \\
%    Address A-B \\
%    Email A-B
%}{%
%   C. Author-three, D. Author-four\sthanks{The fourth author performed the work while at ...}
%}{%
%    School C-D \\
%    Department C-D \\
%    Address C-D \\
%    Email C-D
%}
% 
% Two or more addresses (alternative form).
% -----------------------------------------
% If you need to list more than 2 authors or the option for two options above 
% produces a poor author block, please use the following structure:
%\name{%
%    Author Name$^{\star \dagger}$%
%    \qquad Author Name$^{\star}$%
%    \qquad Author Name$^{\dagger}$\thanks{Thanks to XYZ agency for funding.}%
%}
%\address{%
%    $^{\star}$ Affiliation Number One \\%
%    $^{\dagger}$ Affiliation Number Two%
%}

\begin{document}
\setlist[itemize]{noitemsep, topsep=3pt}
\setlist[itemize]{leftmargin=*}

\maketitle

\begin{abstract}
%We propose a technically grounded primal-dual deterministic policy gradient method for efficiently learning optimal stochastic resource allocation policies in wireless systems, the latter being a perpetual and challenging nonconvex constrained optimization task, and especially timely in modern communication/networking settings involving multiple users with heterogeneous objectives and imprecise or even unknown models and/or channel statistics. 
%Our method not only efficiently exploits gradient availability of popular parametric policy representations, such as deep neural networks, but is also truly model-free, as it relies on consistent zeroth-order gradient approximations of the associated random network services constructed via low-dimensional perturbations in action space, thus fully bypassing any dependence on critics. Both theory and numerical simulations confirm the efficacy and applicability of the proposed approach, as well as its superiority over the current state of the art in terms of both achieving near-optimal performance and scalability.
\vspace{-1bp}
Wireless systems resource allocation refers to perpetual and challenging nonconvex constrained optimization tasks, which are especially timely in modern communications and networking setups involving multiple users with heterogeneous objectives and imprecise or even unknown models and/or channel statistics. In this paper, we propose a technically grounded and scalable primal-dual deterministic policy gradient method for efficiently learning optimal parameterized resource allocation policies. Our method not only efficiently exploits gradient availability of popular universal policy representations, such as deep neural networks, but is also truly model-free, as it relies on consistent zeroth-order gradient approximations of the associated random network services constructed via low-dimensional perturbations in action space, thus fully bypassing any dependence on critics. Both theory and numerical simulations confirm the efficacy and applicability of the proposed approach, as well as its superiority over the current state of the art in terms of both achieving near-optimal performance and scalability.
\end{abstract}
\vspace{-4bp}
\begin{keywords}Wireless Systems, Resource Allocation, Reinforcement Learning, Zeroth-order Optimization, Deterministic Policy Gradients, Deep Learning.
\end{keywords}
\vspace{-8.5bp}
\section{Introduction}
\label{sec:intro}
\vspace{-7.5bp}
Optimally allocating resources in modern wireless systems presents three major challenges \cite{ribeiro2012optimal, kalogerias2020model}: \textit{Infinite dimensionality of policies}; \textit{nonconvex constraints}; \textit{system and channel model unavailability}. Building on now classical dual-domain techniques for globally optimal model-based resource allocation \cite{yu2006dual, ribeiro2012optimal}, as well as effective heuristics
\cite{shi2011iteratively,wu2013flashlinq},
%Working in the dual-domain, minimax Lagrangian formulations were employed \cite{yu2006dual, ribeiro2012optimal} to find generic solutions \dk{Globally optimal? This part need restructuring}, owing to the fact that such formulations are not functional and are always convex. Though, infinite dimensionality of optimal policy still persists in the dual-domain.
%Both model availability and infinite dimensionality
\textit{machine learning for wireless communications} has developed as a rapidly expanding area since, aside from nonconvexity (which results in natural algorithmic limitations), it can effectively address both challenges of infinite dimensionality and (partially) model availability; see, e.g., \cite{sun2018learning, lee2019deep, nasir2019multi, kalogerias2020model, eisen2019learning, Meng2020, khan2020centralized}.

Reinforcement Learning (RL), and in particular \textit{policy gradient algorithms} are attractive for wireless systems resource allocation, because they can naturally accommodate continuous channel responses and allocation decisions, and can produce parameterized optimal resource allocation policies that operate in online, even adaptive settings. Methods involving deterministic and stochastic off-policy learning have been developed recently  \cite{Meng2020, wang2020drl}. These methods employ some form of a gradient function approximator (i.e., a critic), making them essentially model-based, since the choice of such an approximator depends on the structure/nature of the underlying problem.  A model-free primal-dual learning method was presented in \cite{eisen2019learning}, based on the standard stochastic policy gradient method. However, the method in \cite{eisen2019learning} injects noise both in training and implementation phases due to the use of stochastic policies, which is undesirable in the resource allocation setting. Moreover, although one-point policy gradient estimates might be improved with the use of baselines, the latter are computationally expensive, sample inefficient, and cannot be implemented in an online fashion. Recently, \cite{kalogerias2020model} introduced two-point zeroth-order policy gradient representations, optimizing purely deterministic resource allocation policies. However, similar to \cite{eisen2019learning}, the approach in \cite{kalogerias2020model} presents scalability issues, as the corresponding zeroth-order gradient representations  are evaluated directly in the policy parameter space, which can be prohibitively large.

To tackle the limitations in terms of applicability and scalability, we propose a new primal-dual zeroth-order deterministic policy gradient method (referred to as PD-ZDPG+), which relies on consistent zeroth-order two-point gradient approximations of the associated random network services, constructed via \textit{low-dimensional perturbations in action space}, rather than in parameter space (as in \cite{kalogerias2020model}). The advantages of our approach are \textit{three-fold}, as follows:
\vspace{-2bp}
\begin{itemize}
    \item First, as the policies are deterministic, they directly conform with the standard formulation of resource allocation problems in wireless settings, in which randomized policies are undesirable and potentially redundant (this is not achieved in \cite{eisen2019learning}, but it is the case in actor-critic-based methods, such as those in \cite{Meng2020, wang2020drl}).
    \vspace{-6bp}
    \item Second, our algorithm is completely model-free, as it depends only on direct service function evaluations (obtainable during the operation of the system), thus effectively bypassing dependence on critics (this is obviously not the case in actor-critic-based methods, such as those in \cite{Meng2020, wang2020drl}, but is naturally achieved in \cite{eisen2019learning} and \cite{kalogerias2020model}).
    \vspace{-6bp}
    \item Third, the complexity of action-space exploration is independent of the dimensionality of the employed policy parameterization and thus, in conjunction with availability of the gradient of the latter (in any off-the-shelf machine learning software library), our approach scales substantially better with respect to the complexity of the problem, compared with the state of the art (this not the case in \cite{kalogerias2020model}, but is achieved in \cite{eisen2019learning}).
\end{itemize}
\vspace{-3bp}
Consequently, the proposed method evidently combines the best elements among competing approaches in the current literature.

In terms of related work, our primal-dual method resembles general purpose zeroth-order deterministic policy gradient methods in multi-stage unconstrained RL \cite{kumar2020zeroth,vemula2019}; in this work, we essentially consider a single-stage RL formulation, however within a constrained learning setting, leading to a challenging minimax problem formulation. 

\sloppy Lastly, we evaluate the performance of the proposed method against competing policy gradient methods in \cite{eisen2019learning,kalogerias2020model, lillicrap2015continuous, Meng2020, wang2020drl}, adapted to the setting studied herein, and with respect to the points made above. Source code for our experiments may be found at: \href{https://github.com/hassaanhashmi/pd\_zdpg\_plus}{https://github.com/hassaanhashmi/pd\_zdpg\_plus}. Certain proofs are omitted due to lack of space, to be presented in a follow-up work.
            %%-----------------------------------------------%%
            %%-----------------------------------------------%%
\section{Problem Formulation}
\label{sec:prob}
We consider a generic wireless systems resource allocation problem of the form \cite{ribeiro2012optimal, eisen2019learning, kalogerias2020model}
\begin{align} 
\begin{aligned} \label{eq:1}
    \underset{\textbf{x},\bm{\theta}}{\text{maximize}} & \quad g^o(\mathbf{x}) \\
    \text{subject to} & \quad \mathbf{x} \leq \mathbb{E}\left\{ \mathbf{f}(\bm{\phi}(\mathbf{H},\bm{\theta}), \mathbf{H})\right\} \\
    & \quad \mathbf{g}(\mathbf{x})\geq\mathbf{0} , (\mathbf{x}, \bm{\theta}) \in \mathcal{X} \times \Theta
\end{aligned},
\end{align}
where $\mathbf{H} \in \mathcal{H} \subseteq \mathbb{R}^{N_H}$ are random fading channels of the network with a joint distribution $\mathcal{M}_\mathbf{H}$, $\mathbf{f}(\bm{\phi}(\mathbf{H},\bm{\theta}), \mathbf{H})$ are the instantaneous service level metrics, which are functions of the channel vector $\mathbf{H}$ and a parameterized and differentiable policy funtion $\bm{\phi}: \Theta \times \mathcal{H} \to \mathcal{A}\subseteq \mathbb{R}^{N_R}$, with $\mathcal{A}$ closed,
and $g^o:\mathcal{X} \to \mathbb{R}$ and $\mathbf{g}:\mathcal{X} \to \mathbb{R}^{N_g}$ are concave utility functions, usually chosen by the designer.
The expected value of $\mathbf{f}(\bm{\phi}(\mathbf{H},\bm{\theta}), \mathbf{H})$ bounds the ergodic (in a sense, long-term) service level metrics $\mathbf{x}\in \mathcal{X} \subseteq \mathbb{R}^{N_S}$.

Any reasonable iterative optimization algorithm for directly solving \eqref{eq:1} requires evaluations of the gradients $\nabla g^o(\mathbf{x})$, $\nabla \mathbf{g}(\mathbf{x})$ and $\nabla\,\mathbb{E}\{\mathbf{f}(\bm{\phi}(\mathbf{H},\bm{\theta}),\mathbf{H})\}$. As the exact form of the services $\mathbf{f}$ and the channel distribution $\mathcal{M}_{\mathbf{H}}$ are most often not known a priori (incomplete model knowledge due to dependence on propagation physics and increasing complexity of interference and multiple access management models), it is not possible to evaluate the gradients of these functions \cite{eisen2019learning, kalogerias2020model}. To address this limitation, we focus on a \textit{model-free setting}, i.e., we develop a stochastic approximation approach (bypassing the need for knowledge of $\mathcal{M}_{\mathbf{H}}$), in which we are able to approximate the above gradients using appropriate function evaluations, which may be obtained by merely probing the actual wireless system during policy training (also bypassing the need for gradient evaluations --and thus model information--).
            %%-----------------------------------------------%%
            %%-----------------------------------------------%%
\section{Smoothed Surrogates in Action Space}
\label{sec:surrogate}
We first introduce Gaussian-smoothed versions of the functions involved in problem \eqref{eq:1}. Such functions demonstrate, under some feasible assumptions, certain desirable properties, particularly pertaining to their gradient evaluations, which we use to develop the proposed method.
%%-----------------------------------------------%%
\subsection{Gaussian Smoothed Functions}
\label{subsection:3.1}
To begin, let us define
\begin{align}
\label{eq:7}
g^o_{\mu_S}(\mathbf{x}) &\triangleq \mathbb{E}\left\{ g^o (\Pi_{\mathcal{X}}\{\mathbf{x} + {\mu_S}{\bm{U}_S}\})\right\}, \: \mathbf{x} \in \mathcal{X},\\
\label{eq:8}
\mathbf{g}_{\mu_S}(\mathbf{x}) & \triangleq \mathbb{E}\left\{ \mathbf{g} (\Pi_{\mathcal{X}}\{\mathbf{x} + {\mu_S}{\bm{U}_S}\})\right\}, \: \mathbf{x} \in \mathcal{X} \:\: \text{and} \\
\label{eq:9}
\bar{\mathbf{f}}^\phi_{\mu_R}(\bm{\theta}) & \triangleq  \mathbb{E}\left\{\mathbf{f}(\Pi_\mathcal{A}\left\{ \bm{\phi}(\mathbf{H}, \bm{\theta}) + {\mu_R}{\bm{U}_R}\right\}, \mathbf{H})\right\},   \bm{\theta} \in \Theta,
\end{align}
where $\mu_S \geq 0$ and $\mu_R \geq 0$ are smoothing parameters, $\bm{U}_S \sim \mathcal{N}(\mathbf{0}, \mathbf{I}_S)$ and $\bm{U}_R \sim \mathcal{N}(\mathbf{0}, \mathbf{I}_R)$ are random Gaussian vectors, and $\Pi_{\mathcal{Y}}\{\cdot\}$ denotes projection onto a closed set $\mathcal{Y}$.  Also, we assume $\bm{U}_R$ and $\mathbf{H}$ to be independent. This formulation is new and interesting in that we are \textit{smoothing the function $\mathbf{f}$ in the action space $\mathcal{A}$}, rather than the parameter space $\Theta$, as was previously done in \cite{kalogerias2020model}. Then, we may introduce a \textit{smoothed surrogate to problem} \eqref{eq:1}, defined as
\begin{align}
\begin{aligned} \label{eq:10}
\underset{\textbf{x},\bm{\theta}}{\text{maximize}} & \quad g^o_{\mu_S}(\mathbf{x}) \\
\text{subject to} & \quad \mathbf{x} + \texttt{S}(\mu_R) \leq \bar{\mathbf{f}}^\phi_{\mu_R}(\bm{\theta}) \\
& \quad \mathbf{g}_{\mu_S}(\mathbf{x})\geq\mathbf{0} , (\mathbf{x}, \bm{\theta}) \in \mathcal{X} \times \Theta
\end{aligned},
\end{align}
where the non-negative feasibility slack $\texttt{S} : \mathbb{R}_+ \rightarrow \mathbb{R}_+^{N_S}$ prevents constraint violations in \eqref{eq:10} relative to \eqref{eq:1} and is assumed to have certain properties, as explained later (see Assumption \ref{assumption2}). Here, we note that functions $g^o$ and $\mathbf{g}$ are concave. Hence, their smoothed surrogates $g^o_{\mu_S}$ and $\mathbf{g}_{\mu_S}$ are their underestimators respectively \cite{nesterov2017random}, and thus do not require feasibility slacks.

Formulation of the surrogate \eqref{eq:10} enables zeroth-order gradient representations of $g_{\mu}^o$, $\mathbf{g}_{\mu_S}$ and $\bar{\mathbf{f}}^\phi_{\mu_R}$ (i.e., based only on corresponding function evaluations), whenever they appropriately demonstrate certain properties. In the following subsections, we formally define those properties for the surrogate functions and ascertain the feasibility of the surrogate program \eqref{eq:10}, in relation to the original resource allocation program \eqref{eq:1}.
            %%-----------------------------------------------%%
\subsection{Properties of Surrogate Functions}
\label{Zeroth}
We impose appropriate structures on the $i$-th entries $g^i$ and $g^i_{\mu_S}$ of $\mathbf{g}$ and $\mathbf{g}_{\mu_S}$ where $i \in \mathbb{N}^+_{N_g}$, and on the entries $f^{i}$ and $\bar{f}^{\bm{\phi},i}_{\mu_R}$ of $\mathbf{f}$ and $\bar{\mathbf{f}}^{\bm{\phi}}_{\mu_R}$ where $i \in \mathbb{N}^+_{N_S}$, respectively.

\begin{assumption} \label{assumption1}
The following conditions are satisfied: 
\\
\textbf{\textup{C1}} For every $i \in \{o,\mathbb{N}^+_{N_\mathbf{g}}\}$, $g^i(\mathbf{x})$ is globally Lipschitz with constant $L^i_g < \infty$.
\\
\textbf{\textup{C2}} For every $i \in \mathbb{N}^+_{N_S}$, $f^i(\cdot, \mathbf{H})$ is Lipschitz on $\mathcal{A}$ with constant $L^i_f(\mathbf{H})$, such that $\|L^i_f(\mathbf{H})\|_{\mathcal{L}_2}< \infty$.
\\
\sloppy \textbf{\textup{C3}} For every $\bm{\theta}$ and %The policy $\bm{\phi}(\mathbf{H},\cdot)$ is locally Lipschitz on  
almost every $\mathbf{H}$, the parameterization $\bm{\phi}(\mathbf{H},\cdot)$ is Lipschitz in a neighborhood of $\bm{\theta}$ with constant $L^{\bm{\theta}}_{\bm{\phi}}(\mathbf{H})$ such that $\|L^{\bm{\theta}}_{\bm{\phi}}(\mathbf{H})\|_{\mathcal{L}_2}< \infty$.
\end{assumption}

Condition \textbf{C2} of Assumption \ref{assumption1} has the following consequences on the behavior of $\mathbb{E}\{f^i(\cdot, \mathbf{H})\}$ for every $i \in \mathbb{N}^+_{N_S}$.
\begin{proposition} \label{proposition1}         
Let condition \textbf{\textup{C2}} of Assumption \ref{assumption1} be in effect. Then, for every $i \in \mathbb{N}^+_{N_S}$, $\mathbb{E}\{f^i(\cdot, \mathbf{H})\}$ is $\mathbb{E}\{L^i_f(\mathbf{H})\}$-Lipschitz on $\mathcal{A}$. Moreover, it is true that, for every $(\bm{\theta},\bm{u}) \in \Theta \times \mathbb{R}^{N_R}$,
\begin{flalign} \label{eq:11}
    &\quad\quad\quad\mathbb{E}\{ \lvert f^i(\Pi_\mathcal{A}\left\{\bm{\phi}(\mathbf{H},\bm{\theta}) {+} \bm{u}\right\}, \mathbf{H})\rvert \}  &\nonumber \\
    &\hspace{0.65in}\leq \mathbb{E}\{L^i_f(\mathbf{H})\} \| \bm{u} \|_2 {+} \mathbb{E}\{ \lvert f^i(\bm{\phi}(\mathbf{H},\bm{\theta}), \mathbf{H})\rvert \}.
\end{flalign}
\end{proposition}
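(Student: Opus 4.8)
The plan is to reduce everything to the almost-sure Lipschitz bound supplied by \textbf{C2} and then integrate out $\mathbf{H}$.

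For the first assertion, fix $i\in\mathbb{N}^+_{N_S}$ and $\mathbf{a},\mathbf{b}\in\mathcal{A}$. By \textbf{C2}, $|f^i(\mathbf{a},\mathbf{H})-f^i(\mathbf{b},\mathbf{H})|\le L^i_f(\mathbf{H})\,\|\mathbf{a}-\mathbf{b}\|_2$ for almost every $\mathbf{H}$; since $\mathbb{E}\{L^i_f(\mathbf{H})\}\le\|L^i_f(\mathbf{H})\|_{\mathcal{L}_2}<\infty$ by Cauchy--Schwarz (or Jensen), the right-hand side is integrable, so taking expectations and using $|\mathbb{E}\{\cdot\}|\le\mathbb{E}\{|\cdot|\}$ gives $|\mathbb{E}\{f^i(\mathbf{a},\mathbf{H})\}-\mathbb{E}\{f^i(\mathbf{b},\mathbf{H})\}|\le\mathbb{E}\{L^i_f(\mathbf{H})\}\,\|\mathbf{a}-\mathbf{b}\|_2$, i.e.\ the claimed $\mathbb{E}\{L^i_f(\mathbf{H})\}$-Lipschitzness (well-definedness of $\mathbb{E}\{f^i(\cdot,\mathbf{H})\}$ on $\mathcal{A}$ being part of the standing regularity of the service model, e.g.\ integrability of $f^i$ at one feasible point, whence \textbf{C2} propagates it to all of $\mathcal{A}$).

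For \eqref{eq:11}, fix $(\bm{\theta},\bm{u})\in\Theta\times\mathbb{R}^{N_R}$ and observe that, for almost every $\mathbf{H}$, both $\bm{\phi}(\mathbf{H},\bm{\theta})$ and $\Pi_{\mathcal{A}}\{\bm{\phi}(\mathbf{H},\bm{\theta})+\bm{u}\}$ lie in $\mathcal{A}$ --- the former because $\bm{\phi}$ takes values in $\mathcal{A}$, the latter because it is a projection onto $\mathcal{A}$. Applying \textbf{C2} to these two arguments and then the triangle inequality gives, pointwise in $\mathbf{H}$,
\[
|f^i(\Pi_{\mathcal{A}}\{\bm{\phi}(\mathbf{H},\bm{\theta}){+}\bm{u}\},\mathbf{H})|\le L^i_f(\mathbf{H})\,\bigl\|\Pi_{\mathcal{A}}\{\bm{\phi}(\mathbf{H},\bm{\theta}){+}\bm{u}\}-\bm{\phi}(\mathbf{H},\bm{\theta})\bigr\|_2+|f^i(\bm{\phi}(\mathbf{H},\bm{\theta}),\mathbf{H})| .
\]
Since $\bm{\phi}(\mathbf{H},\bm{\theta})=\Pi_{\mathcal{A}}\{\bm{\phi}(\mathbf{H},\bm{\theta})\}$ and the Euclidean projection onto $\mathcal{A}$ is nonexpansive, the middle norm equals $\|\Pi_{\mathcal{A}}\{\bm{\phi}(\mathbf{H},\bm{\theta}){+}\bm{u}\}-\Pi_{\mathcal{A}}\{\bm{\phi}(\mathbf{H},\bm{\theta})\}\|_2\le\|\bm{u}\|_2$; substituting and taking expectations (the right-hand side integrable by \textbf{C2} and the integrability of the probed services) yields \eqref{eq:11}.

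The only nontrivial ingredient is the geometric step $\|\Pi_{\mathcal{A}}\{\bm{\phi}(\mathbf{H},\bm{\theta}){+}\bm{u}\}-\bm{\phi}(\mathbf{H},\bm{\theta})\|_2\le\|\bm{u}\|_2$, which uses simultaneously that $\bm{\phi}(\mathbf{H},\bm{\theta})$ is already feasible and that the projection is nonexpansive; nonexpansiveness is exactly why one wants $\mathcal{A}$ closed \emph{convex} rather than merely closed, since without convexity the distance-minimizing step only gives a factor $2$ on $\|\bm{u}\|_2$ and the clean constant in the statement is lost. The remaining work is measure-theoretic housekeeping --- joint measurability of $\mathbf{H}\mapsto f^i(\Pi_{\mathcal{A}}\{\bm{\phi}(\mathbf{H},\bm{\theta}){+}\bm{u}\},\mathbf{H})$ (a composition of the measurable policy, the continuous projection, and the jointly measurable service map) and the integrability permitting interchange of expectation with the pointwise bounds --- all routine under Assumption \ref{assumption1} and standard regularity of the service model.
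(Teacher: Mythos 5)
Your proposal is correct and follows essentially the same route as the paper's one-line proof: condition \textbf{C2}, Jensen's inequality for the $\mathbb{E}\{L^i_f(\mathbf{H})\}$-Lipschitzness of $\mathbb{E}\{f^i(\cdot,\mathbf{H})\}$, and the triangle inequality for \eqref{eq:11}. Your only addition is making explicit the projection step $\|\Pi_{\mathcal{A}}\{\bm{\phi}(\mathbf{H},\bm{\theta})+\bm{u}\}-\bm{\phi}(\mathbf{H},\bm{\theta})\|_2\le\|\bm{u}\|_2$, and you rightly observe that the clean constant relies on nonexpansiveness of $\Pi_{\mathcal{A}}$ (hence on $\mathcal{A}$ being convex rather than merely closed, otherwise only a factor $2$ is available), a point the paper leaves implicit.
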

\begin{proof}
We start with condition \textbf{C2} and employ Jensen's inequality and the triangle inequality, respectively.
\end{proof}
We now use Assumption \ref{assumption1} and Proposition \ref{proposition1} to establish well-definiteness and basic properties of $g^o_{\mu_S}$, $\mathbf{g}_{\mu_S}$ and $\bar{\mathbf{f}}^{\bm{\phi}}_{\mu_R}$: For $\mathbf{x} \in \mathcal{X}$, $\mu_S > 0$ and for every $i \in \{ o, \mathbb{N}^+_{N_\mathbf{g}}\}$, let us define finite differences
\begin{align} \label{eq:12}
\Delta^i_g(\mathbf{x},\mu_S,\bm{U}_S) \triangleq \frac{g^i(\Pi_{\mathcal{X}}\{\mathbf{x} + {\mu_S}{\bm{U}_S}\}) {-} g^i(\mathbf{x})}{\mu_S}.
\end{align}
Similarly, for all $\bm{\theta} \in \Theta$, $\mu_R > 0$, and for every $i \in \mathbb{N}^+_{N_S}$ we define
\begin{align} \label{eq:13}
&\Delta^i_f(\bm{\theta},\mu_R,\bm{U}_R,\mathbf{H})  \nonumber\\ 
&\triangleq \frac{f^i({\Pi}_\mathcal{A}\left\{\bm{\phi}(\mathbf{H},\bm{\theta}) {+} \mu_R \bm{U}_R \right\}, \mathbf{H}) - f^i(\bm{\phi}(\mathbf{H},\bm{\theta}), \mathbf{H})}{\mu_R}.
\end{align}

Using Assumption \ref{assumption1} and Proposition \ref{proposition1}, we now formally define the set of properties which enable us to evaluate zeroth-order gradients of the smoothed functions defined in Subsection \ref{subsection:3.1}.

\begin{lemma}
\label{lemma2}
Let condition \textbf{C1} of Assumption \ref{assumption1} be in effect. Then, for every $i \in \{o,\mathbb{N}^+_{N_\mathbf{g}}\}$ and for every $\mu_S > 0$, each $g^i_{\mu_S}$ is a well-defined, finite, concave and everywhere differentiable underestimator of $g^i$ on $\mathcal{X}$, such that
\begin{align}
\label{eq:14}
\sup_{\mathbf{x}\in \mathcal{X}} \lvert g^i_{\mu_S}(\mathbf{x}) - g^i(\mathbf{x})\rvert & \leq \mu_S L^i_g \sqrt{N_S},\\
\label{eq:15}
\sup_{\mathbf{x}\in \mathcal{X}} \mathbb{E}\{\| \Delta^i_g(\mathbf{x}, \mu_S, \bm{U}_S)\bm{U}_S \|^2_2\} & \leq ({L^i_g})^2(N_S + 4)^2,\\
\label{eq:18}
\text{and} \quad \mathbb{E}\{ \Delta^i_g(\mathbf{x}, \mu_S, \bm{U}_S)\bm{U}_S\} & \equiv \nabla g^i_{\mu_S}(\mathbf{x})
\end{align}
for all $\mathbf{x} \in \mathcal{X}$.
\end{lemma}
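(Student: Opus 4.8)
The plan is to reduce the statement to the classical Gaussian-smoothing lemma of \cite{nesterov2017random}. The enabling observation is that, with $\mathcal{X}$ a closed convex set (so that the metric projection $\Pi_{\mathcal{X}}$ is single-valued and non-expansive), the composition $\widehat{g}^i \triangleq g^i \circ \Pi_{\mathcal{X}}$ is globally Lipschitz on $\mathbb{R}^{N_S}$ with the same constant $L^i_g$ supplied by condition \textbf{C1} of Assumption \ref{assumption1}, and $\widehat{g}^i(\mathbf{x}) = g^i(\mathbf{x})$ for every $\mathbf{x} \in \mathcal{X}$. Under this identification, $g^i_{\mu_S}$ in \eqref{eq:7}--\eqref{eq:8} is precisely the Gaussian-smoothed version of $\widehat{g}^i$, and essentially every claim will follow from the smoothing machinery applied to $\widehat{g}^i$.

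First I would dispatch well-definedness, finiteness and the uniform bound \eqref{eq:14} together: the Lipschitz estimate $|\widehat{g}^i(\mathbf{x} + \mu_S \bm{U}_S) - \widehat{g}^i(\mathbf{x})| \le L^i_g \mu_S \|\bm{U}_S\|_2$, combined with $\mathbb{E}\|\bm{U}_S\|_2 \le (\mathbb{E}\|\bm{U}_S\|_2^2)^{1/2} = \sqrt{N_S} < \infty$, shows that the defining expectation converges absolutely for every $\mathbf{x}$, and, taking expectations in the same estimate and using $\widehat{g}^i = g^i$ on $\mathcal{X}$, gives $\sup_{\mathbf{x}\in\mathcal{X}}|g^i_{\mu_S}(\mathbf{x}) - g^i(\mathbf{x})| \le L^i_g \mu_S \sqrt{N_S}$.

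Next, for everywhere differentiability and the zeroth-order representation \eqref{eq:18}, I would write $g^i_{\mu_S}$ as the convolution integral $g^i_{\mu_S}(\mathbf{x}) = (2\pi\mu_S^2)^{-N_S/2}\int \widehat{g}^i(\mathbf{y})\exp(-\|\mathbf{y}-\mathbf{x}\|_2^2/(2\mu_S^2))\,\mathrm{d}\mathbf{y}$, in which $\mathbf{x}$ enters only through the $C^{\infty}$ Gaussian kernel. Differentiating under the integral sign (justified by dominated convergence, the linear growth of $\widehat{g}^i$ dominating both the integrand and its $\mathbf{x}$-gradient by polynomial-times-Gaussian tails) yields $\nabla g^i_{\mu_S}(\mathbf{x}) = \mu_S^{-1}\mathbb{E}\{\widehat{g}^i(\mathbf{x}+\mu_S\bm{U}_S)\bm{U}_S\}$, and subtracting the identically vanishing term $\mu_S^{-1}\mathbb{E}\{g^i(\mathbf{x})\bm{U}_S\} = \mathbf{0}$ recasts this exactly as $\mathbb{E}\{\Delta^i_g(\mathbf{x},\mu_S,\bm{U}_S)\bm{U}_S\}$, with $\Delta^i_g$ as in \eqref{eq:12}. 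The second-moment bound \eqref{eq:15} then drops out of the pointwise inequality $|\Delta^i_g(\mathbf{x},\mu_S,\bm{U}_S)| \le L^i_g\|\bm{U}_S\|_2$ (Lipschitzness once more) together with the Gaussian fourth-moment bound $\mathbb{E}\|\bm{U}_S\|_2^4 \le (N_S+4)^2$ of \cite{nesterov2017random}.

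The step I expect to be the main obstacle is the concavity / underestimator claim, since the composition $g^i \circ \Pi_{\mathcal{X}}$ need not be concave on $\mathbb{R}^{N_S}$ even when $g^i$ is concave on $\mathcal{X}$. The cleanest route is to pass to a globally concave, $L^i_g$-Lipschitz extension $\bar{g}^i$ of $g^i$ to $\mathbb{R}^{N_S}$ (which exists since a concave Lipschitz function on a convex set extends to a concave Lipschitz function on the ambient space with the same constant), for which Jensen's inequality immediately delivers both concavity of $\mathbf{x}\mapsto\mathbb{E}\{\bar{g}^i(\mathbf{x}+\mu_S\bm{U}_S)\}$ and the underestimator inequality $\mathbb{E}\{\bar{g}^i(\mathbf{x}+\mu_S\bm{U}_S)\}\le \bar{g}^i(\mathbf{x}) = g^i(\mathbf{x})$ on $\mathcal{X}$. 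What then requires genuine care is reconciling this with the projection-based definition \eqref{eq:7}--\eqref{eq:8}: either by showing the two smoothings coincide on the relevant range of $\mathbf{x}$, or by a direct argument combining the symmetry $\bm{U}_S \overset{d}{=} -\bm{U}_S$, midpoint concavity of $g^i$ on $\mathcal{X}$, and the non-expansiveness estimates $\|\Pi_{\mathcal{X}}\{\mathbf{x}\pm\mu_S\bm{U}_S\}-\mathbf{x}\|_2\le\mu_S\|\bm{U}_S\|_2$ valid for $\mathbf{x}\in\mathcal{X}$. This reconciliation is where the bulk of the careful work sits; the remaining four assertions are routine consequences of the Nesterov--Spokoiny estimates.
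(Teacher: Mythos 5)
Your reduction to the Nesterov--Spokoiny machinery \cite{nesterov2017random} via $\widehat{g}^i = g^i\circ\Pi_{\mathcal{X}}$ handles well-definedness, finiteness, \eqref{eq:14}, \eqref{eq:15}, differentiability and the representation \eqref{eq:18} correctly: nonexpansiveness of the projection gives $\lvert\Delta^i_g\rvert \le L^i_g\|\bm{U}_S\|_2$ for $\mathbf{x}\in\mathcal{X}$, and the convolution/differentiation-under-the-integral and fourth-moment arguments are exactly the ones behind the result the paper defers to (the proof is omitted here and attributed to Lemma 3 of \cite{kalogerias2020model}, which rests on the same smoothing estimates).

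The genuine gap is the step you flag and then leave open: the concavity/underestimator claim. Neither of your proposed reconciliations can succeed for the projected smoothing \eqref{eq:7}: the projected and unprojected smoothings do not coincide within an $O(\mu_S)$ neighborhood of $\partial\mathcal{X}$, and the symmetry/midpoint argument fails there too, because for the definition with $\Pi_{\mathcal{X}}$ inside the expectation these properties are simply false near the boundary. Take $N_S=1$, $\mathcal{X}=[0,1]$, $g(x)=x$, $U\sim\mathcal{N}(0,1)$. Then $g_{\mu_S}(0)=\mathbb{E}\{\Pi_{[0,1]}\{\mu_S U\}\}>0=g(0)$, so the underestimator property fails at $x=0\in\mathcal{X}$, and a direct computation gives $g_{\mu_S}''(x)=\mu_S^{-1}\left[\varphi(x/\mu_S)-\varphi((1-x)/\mu_S)\right]>0$ for $x\in[0,1/2)$, where $\varphi$ is the standard normal density, so $g_{\mu_S}$ is strictly convex there and hence not concave on $\mathcal{X}$. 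Concavity and underestimation are obtained in \cite{kalogerias2020model} (and in \cite{nesterov2017random}) for the \emph{unprojected} smoothing $\mathbb{E}\{g^i(\mathbf{x}+\mu_S\bm{U}_S)\}$ of a function that condition \textbf{C1} treats as globally Lipschitz (and concave) on the whole space --- i.e., precisely your extension $\bar{g}^i$, for which Jensen's inequality settles both claims with no reconciliation needed. So to complete the argument you must adopt that reading of \eqref{eq:7} (drop or neutralize the projection, or restrict the concavity/underestimator assertions to points whose distance from $\partial\mathcal{X}$ is large relative to $\mu_S$); as written, your proposal establishes \eqref{eq:14}, \eqref{eq:15} and \eqref{eq:18} but not the remaining assertions of Lemma \ref{lemma2}.
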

\begin{lemma}
\label{lemma3}
\sloppy Let condition \textbf{C2} of Assumption \ref{assumption1} be in effect. Then, for every $\mu_R > 0$ and for all $i \in \mathbb{N}^+_{N_S}$, each $\bar{f}^{\bm{\phi},i}_{\mu_R}$ and each $\mathbb{E}\{f^i(\bm{\phi}(\mathbf{H}, \cdot),\mathbf{H})\}{\equiv} \bar{f}^{\bm{\phi},i}(\cdot)$ are such that
\begin{align}
   \label{eq:19}
   \sup_{\bm{\theta}\in \Theta} 
   \lvert 
   \bar{f}^{\bm{\phi},i}_{\mu_R}(\bm{\theta}) {-} \bar{f}^{\bm{\phi},i}(\bm{\theta}) 
   \rvert 
   &
   \hspace{-1.5bp}\leq\hspace{-1bp} 
   \mu_R \mathbb{E}\{L^i_f(\mathbf{H})\} \sqrt{N_R}
   \\
   \label{eq:20}
   \hspace{-6bp}
   \sup_{\bm{\theta}\in \Theta} \hspace{-0.5bp} {\mathbb{E}}\{\| \Delta^i_f(\bm{\theta}{,} \mu_R{,} \bm{U}_R{,} \mathbf{H})\bm{U}_R \|^2_2\} 
   &
   \hspace{-1.5bp}\leq\hspace{-1bp}
   \mathbb{E}\{(L^i_f(\mathbf{H}))^2\}\hspace{-.5bp}(N_R {+} 4)^2{.}
   \hspace{-4bp}
\end{align}
Additionally, it is true that
\begin{align}
%\label{eq:19}
%\sup_{\bm{\theta}\in \Theta} \lvert \bar{f}^{\bm{\phi},i}_{\mu_R}(\bm{\theta}) {-} \bar{f}^{\bm{\phi},i}(\bm{\theta}) \rvert \leq& \mu_S \mathbb{E}\{L^i_f(\mathbf{H})\} \sqrt{N_R},\\
%\label{eq:20}
%\sup_{\bm{\theta}\in \Theta} {\mathbb{E}}\{\| \Delta^i_f(\bm{\theta}{,} \mu_R{,} \bm{U}_R{,} \mathbf{H})\bm{U}_R \|^2_2\} &\nonumber\\
% \leq\mathbb{E}\{(&L^i_f(\mathbf{H}))^2\}(N_R {+} 4)^2{,}\\
\label{eq:21}
& \hspace{-2bp}\mathbb{E}\{ \Delta^i_f(\bm{\theta} {,} \mu_R {,} \bm{U}_R {,} \mathbf{H})\bm{U}_R{\mid} \mathbf{H}\}  \nonumber \\ 
& \hspace{8bp}\equiv \nabla_{\bm{a}} \mathbb{E}\{f^i(\Pi_\mathcal{A}\{\bm{a}{+} \mu_R  \bm{U}_R \}{,} \mathbf{H}){\mid} \mathbf{H}\}{\big\rvert}_{\bm{a} {=} \bm{\phi}(\mathbf{H},\bm{\theta})}{.}
\end{align}
\end{lemma}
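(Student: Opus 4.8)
The plan is to reduce \eqref{eq:19}--\eqref{eq:21} to the single-channel Gaussian-smoothing facts already underlying Lemma \ref{lemma2}, applied not to $g^i$ but to the composed map $\bm{a} \mapsto \tilde{f}^i(\bm{a},\mathbf{H}) \triangleq f^i(\Pi_{\mathcal{A}}\{\bm{a}\},\mathbf{H})$, and then averaging over $\mathbf{H}$. The first observation is that, by condition \textbf{C2} of Assumption \ref{assumption1} together with nonexpansiveness of $\Pi_{\mathcal{A}}$ (which holds for the standard, convex action sets used in resource allocation), $\tilde{f}^i(\cdot,\mathbf{H})$ is globally $L^i_f(\mathbf{H})$-Lipschitz on $\mathbb{R}^{N_R}$; moreover, since $\bm{\phi}(\mathbf{H},\bm{\theta}) \in \mathcal{A}$, the projection acts as the identity there, so $\tilde{f}^i(\bm{\phi}(\mathbf{H},\bm{\theta}),\mathbf{H}) = f^i(\bm{\phi}(\mathbf{H},\bm{\theta}),\mathbf{H})$, $\bar{f}^{\bm{\phi},i}_{\mu_R}(\bm{\theta}) = \mathbb{E}_{\mathbf{H}}\{\mathbb{E}_{\bm{U}_R}\{\tilde{f}^i(\bm{\phi}(\mathbf{H},\bm{\theta}) + \mu_R\bm{U}_R,\mathbf{H})\}\}$, and $\bar{f}^{\bm{\phi},i}(\bm{\theta}) = \mathbb{E}_{\mathbf{H}}\{\tilde{f}^i(\bm{\phi}(\mathbf{H},\bm{\theta}),\mathbf{H})\}$; finiteness of both quantities follows from \eqref{eq:11} in Proposition \ref{proposition1} together with $\|L^i_f(\mathbf{H})\|_{\mathcal{L}_2} < \infty$ (and integrability of $f^i(\bm{\phi}(\mathbf{H},\bm{\theta}),\mathbf{H})$). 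Throughout I would write $F_{\mathbf{H}}(\bm{a}) \triangleq \mathbb{E}_{\bm{U}_R}\{\tilde{f}^i(\bm{a} + \mu_R\bm{U}_R,\mathbf{H})\}$ for the per-channel Gaussian smoothing, which equals $\mathbb{E}\{f^i(\Pi_{\mathcal{A}}\{\bm{a} + \mu_R\bm{U}_R\},\mathbf{H}) \mid \mathbf{H}\}$ by independence of $\bm{U}_R$ and $\mathbf{H}$.

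For \eqref{eq:21}, which is the crux, I would show that $F_{\mathbf{H}}$ is everywhere differentiable with $\nabla F_{\mathbf{H}}(\bm{a}) = \mu_R^{-1}\mathbb{E}_{\bm{U}_R}\{\tilde{f}^i(\bm{a} + \mu_R\bm{U}_R,\mathbf{H})\bm{U}_R\}$. Writing $F_{\mathbf{H}}$ as the convolution of $\tilde{f}^i(\cdot,\mathbf{H})$ with the density of $\mathcal{N}(\mathbf{0},\mu_R^2\mathbf{I}_R)$ and moving the $\bm{a}$-gradient onto the (smooth) Gaussian kernel gives the stated formula; the interchange of differentiation and integration is justified by dominated convergence, since the Lipschitz bound from \textbf{C2} forces at most linear growth of $\tilde{f}^i(\cdot,\mathbf{H})$, which is dominated by the Gaussian tails (exactly the argument of \cite{nesterov2017random}). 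Since $\mathbb{E}_{\bm{U}_R}\{\bm{U}_R\} = \mathbf{0}$, one may subtract $\mu_R^{-1}\tilde{f}^i(\bm{a},\mathbf{H})\bm{U}_R$ inside the expectation without changing it, and evaluating at $\bm{a} = \bm{\phi}(\mathbf{H},\bm{\theta})$ turns the integrand into $\Delta^i_f(\bm{\theta},\mu_R,\bm{U}_R,\mathbf{H})\bm{U}_R$; recognizing the expectation over $\bm{U}_R$ as the conditional expectation given $\mathbf{H}$ yields \eqref{eq:21}, since the right-hand side of \eqref{eq:21} is precisely $\nabla F_{\mathbf{H}}(\bm{a})\big|_{\bm{a} = \bm{\phi}(\mathbf{H},\bm{\theta})}$.

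For \eqref{eq:19} I would use Jensen's inequality to pull the absolute value inside $\mathbb{E}_{\mathbf{H}}$, giving $\lvert\bar{f}^{\bm{\phi},i}_{\mu_R}(\bm{\theta}) - \bar{f}^{\bm{\phi},i}(\bm{\theta})\rvert \le \mathbb{E}_{\mathbf{H}}\{\lvert F_{\mathbf{H}}(\bm{\phi}(\mathbf{H},\bm{\theta})) - \tilde{f}^i(\bm{\phi}(\mathbf{H},\bm{\theta}),\mathbf{H})\rvert\}$, then bound the integrand per channel by $L^i_f(\mathbf{H})\,\mathbb{E}_{\bm{U}_R}\{\mu_R\|\bm{U}_R\|_2\} \le \mu_R L^i_f(\mathbf{H})\sqrt{N_R}$ (Lipschitzness, then Jensen on $\|\bm{U}_R\|_2$), and finally take $\mathbb{E}_{\mathbf{H}}$ and the supremum over $\bm{\theta}$, the latter being immediate since the bound does not depend on $\bm{\theta}$. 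For \eqref{eq:20}, Fubini/independence gives $\mathbb{E}\{\|\Delta^i_f(\bm{\theta},\mu_R,\bm{U}_R,\mathbf{H})\bm{U}_R\|_2^2\} = \mathbb{E}_{\mathbf{H}}\{\mathbb{E}_{\bm{U}_R}\{\|\Delta^i_f\bm{U}_R\|_2^2\}\}$; the pointwise Lipschitz estimate $\lvert\Delta^i_f\rvert \le L^i_f(\mathbf{H})\|\bm{U}_R\|_2$ yields $\|\Delta^i_f\bm{U}_R\|_2^2 \le (L^i_f(\mathbf{H}))^2\|\bm{U}_R\|_2^4$, and $\mathbb{E}_{\bm{U}_R}\{\|\bm{U}_R\|_2^4\} = N_R(N_R+2) \le (N_R+4)^2$, so the inner expectation is at most $(L^i_f(\mathbf{H}))^2(N_R+4)^2$; averaging over $\mathbf{H}$ and taking $\sup_{\bm{\theta}}$ closes it. All expectations are finite because $\|L^i_f(\mathbf{H})\|_{\mathcal{L}_2} < \infty$.

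The main obstacle is the differentiability claim for $F_{\mathbf{H}}$ and the gradient formula in \eqref{eq:21}: because $f^i(\cdot,\mathbf{H})$ (hence $\tilde{f}^i(\cdot,\mathbf{H})$) is only Lipschitz, the derivative cannot be taken under the integral on the integrand itself and must be routed through the Gaussian density, which requires the dominated-convergence argument sketched above (linear growth from \textbf{C2} versus Gaussian decay). A secondary technical point is the reliance on $\Pi_{\mathcal{A}}$ being nonexpansive so that $\tilde{f}^i(\cdot,\mathbf{H})$ retains the constant $L^i_f(\mathbf{H})$ globally; this is automatic for the convex action sets encountered in wireless resource allocation. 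Everything else -- the Fubini interchanges, the Gaussian moment bounds, and uniformity in $\bm{\theta}$ -- is routine bookkeeping that closely parallels the proof of Lemma \ref{lemma2}.
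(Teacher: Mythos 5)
Your proposal is correct and follows essentially the same route the paper relies on: the paper omits this proof, deferring to Lemmata 3--4 of \cite{kalogerias2020model}, i.e., the standard Nesterov--Spokoiny Gaussian-smoothing identities \cite{nesterov2017random} applied conditionally on $\mathbf{H}$ to the projected map $\bm{a}\mapsto f^i(\Pi_{\mathcal{A}}\{\bm{a}\},\mathbf{H})$ with the action dimension $N_R$ in place of $N_{\bm{\phi}}$, which is exactly your construction (convolution with the Gaussian kernel, differentiation through the kernel, zero-mean baseline subtraction, then Jensen/Fubini and the Gaussian moment bounds for \eqref{eq:19}--\eqref{eq:20}). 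Your only extra ingredient, nonexpansiveness of $\Pi_{\mathcal{A}}$ so that the Lipschitz constant $L^i_f(\mathbf{H})$ survives the composition, is indeed needed and is implicitly assumed (convex $\mathcal{A}$) in this line of work, so flagging it is appropriate rather than a gap.
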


Lemmata \ref{lemma2} and \ref{lemma3} are similar to (\hspace{-0.1pt}\cite{kalogerias2020model}, Lemmata 3 and 4, respectively). However, in Lemma \ref{lemma3}, owing to condition \textbf{C2} of Assumption \ref{assumption1}, there is a key difference in that we are concerned with zeroth-order gradient evaluations on $\mathcal{A}$ rather than $\Theta$. Thus, the dimension of policy parameter $N_{\bm{\phi}}$ in the respective result in  \cite{kalogerias2020model} is replaced by the action space dimension $N_R$, where $N_{\boldsymbol{\phi}} \gg N_R$.

We now present a deterministic policy gradient theorem which exploits our two-point zeroth-order gradient evaluations in action space, which will be used in the proposed primal-dual learning algorithm presented in Section \ref{sec:algo}.
\begin{theorem}[\textbf{A Deterministic Policy Gradient Theorem}]
\label{theorem:4}
\sloppy Let conditions \textbf{C2} and \textbf{C3} of Assumption \ref{assumption1} be in effect. Then, for every $\mu_R > 0$, for all $\bm{\theta} \in \Theta $, and for all $i \in \mathbb{N}^+_{N_S}$, each $\bar{f}^{\bm{\phi},i}_{\mu_R}(\bm{\theta})$ is such that
\begin{equation}
    \label{eq:22}
\nabla_{\bm{\theta}} \bar{f}^{\bm{\phi},i}_{\mu_R}(\bm{\theta}){\equiv}
\mathbb{E} \big\{ {(\Delta^i_f(\bm{\theta}, \mu_R,\bm{U}_R,\mathbf{H})\bm{U}_R)}^T \nabla_{\bm{\theta}}\bm{\phi}(\mathbf{H},\bm{\theta})\big\}.
\end{equation}
\end{theorem}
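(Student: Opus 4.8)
The plan is to express $\bar{f}^{\bm{\phi},i}_{\mu_R}$ as an expectation over $\mathbf{H}$ of a composition between the parameterization $\bm{\phi}(\mathbf{H},\cdot)$ and a smooth outer function, and then differentiate by combining the chain rule with a Leibniz-type interchange of $\nabla_{\bm{\theta}}$ and $\mathbb{E}$. Concretely, for fixed $\mathbf{H}$ introduce the inner ($\bm{U}_R$-)smoothed map $F^i_{\mu_R}(\bm{a},\mathbf{H}) \triangleq \mathbb{E}\{f^i(\Pi_{\mathcal{A}}\{\bm{a}+\mu_R\bm{U}_R\},\mathbf{H}) \mid \mathbf{H}\}$; using independence of $\bm{U}_R$ and $\mathbf{H}$ together with the tower property, $\bar{f}^{\bm{\phi},i}_{\mu_R}(\bm{\theta}) = \mathbb{E}\{F^i_{\mu_R}(\bm{\phi}(\mathbf{H},\bm{\theta}),\mathbf{H})\}$. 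By Lemma \ref{lemma3} (namely \eqref{eq:21}), $F^i_{\mu_R}(\cdot,\mathbf{H})$ is differentiable with $\nabla_{\bm{a}}F^i_{\mu_R}(\bm{a},\mathbf{H})\big|_{\bm{a}=\bm{\phi}(\mathbf{H},\bm{\theta})} = \mathbb{E}\{\Delta^i_f(\bm{\theta},\mu_R,\bm{U}_R,\mathbf{H})\bm{U}_R \mid \mathbf{H}\}$ (the Gaussian convolution in fact renders it $C^\infty$), and since $\bm{\phi}$ is differentiable, the chain rule gives, for a.e.\ $\mathbf{H}$,
\[
\nabla_{\bm{\theta}}\big[F^i_{\mu_R}(\bm{\phi}(\mathbf{H},\bm{\theta}),\mathbf{H})\big] = \big(\mathbb{E}\{\Delta^i_f(\bm{\theta},\mu_R,\bm{U}_R,\mathbf{H})\bm{U}_R \mid \mathbf{H}\}\big)^{T}\, \nabla_{\bm{\theta}}\bm{\phi}(\mathbf{H},\bm{\theta}).
\]
As $\nabla_{\bm{\theta}}\bm{\phi}(\mathbf{H},\bm{\theta})$ is $\mathbf{H}$-measurable it may be pulled inside the conditional expectation, rewriting the right-hand side as $\mathbb{E}\{(\Delta^i_f(\bm{\theta},\mu_R,\bm{U}_R,\mathbf{H})\bm{U}_R)^{T}\nabla_{\bm{\theta}}\bm{\phi}(\mathbf{H},\bm{\theta}) \mid \mathbf{H}\}$.

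Next I would interchange $\nabla_{\bm{\theta}}$ with the outer expectation over $\mathbf{H}$, i.e.\ $\nabla_{\bm{\theta}}\mathbb{E}\{F^i_{\mu_R}(\bm{\phi}(\mathbf{H},\bm{\theta}),\mathbf{H})\} = \mathbb{E}\{\nabla_{\bm{\theta}}F^i_{\mu_R}(\bm{\phi}(\mathbf{H},\bm{\theta}),\mathbf{H})\}$, and then conclude by the tower property that this equals $\mathbb{E}\{(\Delta^i_f(\bm{\theta},\mu_R,\bm{U}_R,\mathbf{H})\bm{U}_R)^{T}\nabla_{\bm{\theta}}\bm{\phi}(\mathbf{H},\bm{\theta})\}$, which is precisely \eqref{eq:22}.

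The main obstacle is rigorously justifying this last interchange, and this is exactly where conditions \textbf{C2} and \textbf{C3} enter: the goal is to dominate the difference quotients of $\bm{\theta}'\mapsto F^i_{\mu_R}(\bm{\phi}(\mathbf{H},\bm{\theta}'),\mathbf{H})$ near $\bm{\theta}$ uniformly by an integrable function of $\mathbf{H}$. From \textbf{C2}, non-expansiveness of $\Pi_{\mathcal{A}}$, and Jensen's inequality, $F^i_{\mu_R}(\cdot,\mathbf{H})$ is $L^i_f(\mathbf{H})$-Lipschitz on $\mathbb{R}^{N_R}$, hence $\|\nabla_{\bm{a}}F^i_{\mu_R}(\cdot,\mathbf{H})\|\le L^i_f(\mathbf{H})$ everywhere; from \textbf{C3}, $\bm{\phi}(\mathbf{H},\cdot)$ is $L^{\bm{\theta}}_{\bm{\phi}}(\mathbf{H})$-Lipschitz in a neighborhood of $\bm{\theta}$. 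Composing these bounds (a mean-value inequality along the composition) shows the relevant difference quotients are bounded by $L^i_f(\mathbf{H})\,L^{\bm{\theta}}_{\bm{\phi}}(\mathbf{H})$, which is integrable since, by Cauchy--Schwarz, $\mathbb{E}\{L^i_f(\mathbf{H})L^{\bm{\theta}}_{\bm{\phi}}(\mathbf{H})\} \le \|L^i_f(\mathbf{H})\|_{\mathcal{L}_2}\,\|L^{\bm{\theta}}_{\bm{\phi}}(\mathbf{H})\|_{\mathcal{L}_2} < \infty$. Dominated convergence then legitimizes differentiating under the expectation, and the same integrability underpins the measurability bookkeeping needed to factor $\nabla_{\bm{\theta}}\bm{\phi}$ through the conditional expectation above. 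A minor care point is that \textbf{C3} is only local, so the domination must be carried out on a fixed compact neighborhood of $\bm{\theta}$ (any sequence $\bm{\theta}'\to\bm{\theta}$ eventually lies in it), which is harmless for evaluating the gradient at $\bm{\theta}$.
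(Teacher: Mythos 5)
Your proposal is correct and follows essentially the same route as the paper: rewrite $\bar{f}^{\bm{\phi},i}_{\mu_R}$ via the tower property as an outer expectation over $\mathbf{H}$ of the inner Gaussian-smoothed map, interchange $\nabla_{\bm{\theta}}$ with the expectation (the paper's ``Leibniz integral rule'' step), and apply the chain rule together with \eqref{eq:21}. Your added domination argument (Lipschitz bounds from \textbf{C2}--\textbf{C3}, Cauchy--Schwarz, dominated convergence) simply makes explicit the justification the paper leaves implicit.
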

\begin{proof}
For $i \in \mathbb{N}^+_{N_S}$, consider the expectation function $\bar{f}^{\bm{\phi},i}_{\mu_R}(\cdot) \equiv \mathbb{E}\left\{f^i(\Pi_{\mathcal{A}}\{\bm{\phi}(\mathbf{H}, \cdot){+}\mu_R \bm{U}_R\}, \mathbf{H})\right\}$. Then, from the law of total expectation, the Leibniz integral rule, and \eqref{eq:21}, we verify \eqref{eq:22} as
\begin{flalign} \label{eq:27}
    &\hspace{-4bp}\nabla_{\bm{\theta}} \bar{f}^{\bm{\phi},i}_{\mu_R}(\bm{\theta})  \nonumber\\
    &{=} \nabla_{\bm{\theta}} \mathbb{E}\{ \mathbb{E}\{f^i(\Pi_{\mathcal{A}}\{\bm{\phi}(\mathbf{H},\bm{\theta}){+}\mu_R \bm{U}_R\}, \mathbf{H}) {\mid} \mathbf{H}\} \} \nonumber\\
    &{\equiv} \mathbb{E}\{ \nabla_{\bm{\theta}} \mathbb{E}\{f^i(\Pi_{\mathcal{A}}\{\bm{\phi}(\mathbf{H},\bm{\theta}){+}\mu_R \bm{U}_R\}, \mathbf{H}) {\mid} \mathbf{H}\} \} \nonumber\\
    &{\equiv} \mathbb{E}\big\{\nabla_{\bm{a}}\mathbb{E}\{f^i(\Pi_{\mathcal{A}}\{\bm{a}{+}\mu_R \bm{U}_R\}, \mathbf{H}) {\mid} \mathbf{H}\} ^T \big\rvert_{\bm{a}{=}\bm{\phi}(\mathbf{H}, \bm{\theta})}  \nonumber \\ 
    & \hspace{2in} \times\nabla_{\bm{\theta}} \bm{\phi}(\mathbf{H}, \bm{\theta}) \big\} \nonumber\\
    &{\equiv} {\mathbb{E}} \big\{ {(\Delta^i_f(\bm{\theta}{,}\mu_R{,}\bm{U}_R{,}\mathbf{H})\bm{U}_R)}^T \nabla_{\bm{\theta}}\bm{\phi}(\mathbf{H}{,}\bm{\theta})\big\},
\end{flalign}
for all $\bm{\theta} \in \Theta $.
\end{proof}

%%-----------------------------------------------%%
\subsection{Feasible Solutions with Surrogate}
\label{subsection:feasibility}
Before proceeding with exploiting our smoothed surrogate \eqref{eq:10} together with Theorem \ref{theorem:4}, we put forth conditions that ensure feasibility of the  surrogate, specifically on the feasible set of the \textit{original} parameterized problem \eqref{eq:1}, which is the one we can initially specify.

The premise of our formulation is to show the existence of at least one strictly feasible point for \eqref{eq:1} and \eqref{eq:10} \textit{simultaneously}. Moreover, as we also show, \eqref{eq:10} can be made strictly feasible at will (due to the feasibility slack $\texttt{S}(\mu_R)$). We first define Lipschitz constant vectors
\begin{equation} \label{eq:32}
    \bm{c}_S \, {\triangleq} \left[ L^1_g \, {\ldots} \, L^{N_\mathbf{g}}_g \right]^T \text{and} \:\, \bm{c}_R\, {\triangleq} \left[ \mathbb{E}\{L^1_f\} \, {\ldots} \, \mathbb{E}\{L^{N_S}_f\} \right]^T{,}
\end{equation}
and consider the following assumption.
\begin{assumption} \label{assumption2} The \textit{feasibility slack}  $\texttt{S}_{\mathbf{f}}$ is increasing around the origin, and $\lim_{\mu_R \to 0}\texttt{S}_\mathbf{f}(\mu_R) \equiv \texttt{S}_\mathbf{f}(0) \equiv \mathbf{0}$.
\end{assumption}
We now state results which guarantee feasibility of \eqref{eq:10} relative to \eqref{eq:1}. We analyze both the strict feasibility of \eqref{eq:10} and constraint violations in \eqref{eq:1} for every feasible solution of \eqref{eq:10}.
\begin{theorem}
\label{theorem:5}
Let Assumptions \ref{assumption1} and \ref{assumption2} be in effect, and let $(\mathbf{x}^*, \bm{\theta}^*) \in \mathbb{R}^{N_S} {\times} \mathbb{R}^{N_{\bm{\phi}}}$ be a strictly feasible point of the problem \eqref{eq:1}. Then, there exist $\mu_S^* > 0$ and $\mu_R^* > 0$, possibly dependent on $(\mathbf{x}^*, \bm{\theta}^*)$, such that, for every $0\leq\mu_S\leq\mu_S^*$ and $0\leq\mu_R\leq\mu_R^*$, the point $(\mathbf{x}^*, \bm{\theta}^*)$ is strictly feasible for  \eqref{eq:10}.
\end{theorem}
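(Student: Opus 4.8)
The plan is to reduce the claim to two independent ``small-perturbation'' arguments --- one in $\mu_S$ for the concave constraints, one in $\mu_R$ for the service constraint --- each obtained by plugging $(\mathbf{x}^*,\bm{\theta}^*)$ into the uniform approximation bounds already recorded in Lemmata~\ref{lemma2} and \ref{lemma3}. First I would make strict feasibility of $(\mathbf{x}^*,\bm{\theta}^*)$ for \eqref{eq:1} quantitative: there are componentwise positive slacks $\bm{\varepsilon}_R \triangleq \mathbb{E}\{\mathbf{f}(\bm{\phi}(\mathbf{H},\bm{\theta}^*),\mathbf{H})\} - \mathbf{x}^* > \mathbf{0}$ and $\bm{\varepsilon}_S \triangleq \mathbf{g}(\mathbf{x}^*) > \mathbf{0}$, together with $(\mathbf{x}^*,\bm{\theta}^*)\in\mathcal{X}\times\Theta$; since the set-membership constraint is identical in \eqref{eq:1} and \eqref{eq:10}, nothing needs to be checked there. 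I would also note in passing that $\mathbb{E}\{L^i_f(\mathbf{H})\}<\infty$ and finiteness of $\bar{f}^{\bm{\phi},i}(\bm{\theta}^*)$ follow from condition \textbf{C2} and Proposition~\ref{proposition1}, so every quantity below is well-defined.

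For the constraints $\mathbf{g}_{\mu_S}(\mathbf{x}^*)\ge\mathbf{0}$: by \eqref{eq:14} in Lemma~\ref{lemma2}, for each $i\in\mathbb{N}^+_{N_\mathbf{g}}$ we have $g^i_{\mu_S}(\mathbf{x}^*)\ge g^i(\mathbf{x}^*) - \mu_S L^i_g\sqrt{N_S} = \varepsilon_{S,i} - \mu_S L^i_g\sqrt{N_S}$, which is strictly positive as soon as $\mu_S < \varepsilon_{S,i}/(L^i_g\sqrt{N_S})$ (no restriction on that component when $L^i_g=0$); choosing $\mu_S^*$ strictly below $\min_i \varepsilon_{S,i}/(L^i_g\sqrt{N_S})$ makes $g^i_{\mu_S}(\mathbf{x}^*)>0$ for all $i$ and all $0\le\mu_S\le\mu_S^*$. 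For the service constraint $\mathbf{x}^*+\texttt{S}(\mu_R)\le\bar{\mathbf{f}}^{\bm{\phi}}_{\mu_R}(\bm{\theta}^*)$: by \eqref{eq:19} in Lemma~\ref{lemma3}, $\bar{f}^{\bm{\phi},i}_{\mu_R}(\bm{\theta}^*)\ge\bar{f}^{\bm{\phi},i}(\bm{\theta}^*) - \mu_R\,\mathbb{E}\{L^i_f(\mathbf{H})\}\sqrt{N_R} = x^*_i + \varepsilon_{R,i} - \mu_R\,\mathbb{E}\{L^i_f(\mathbf{H})\}\sqrt{N_R}$, so the $i$-th service constraint of \eqref{eq:10} holds strictly at $(\mathbf{x}^*,\bm{\theta}^*)$ whenever $\mu_R\,\mathbb{E}\{L^i_f(\mathbf{H})\}\sqrt{N_R} + \texttt{S}_i(\mu_R) < \varepsilon_{R,i}$. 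Under Assumption~\ref{assumption2}, $\texttt{S}_i(\mu_R)\downarrow 0$ as $\mu_R\downarrow 0$ and $\texttt{S}_i$ is increasing near the origin; since $\mu_R\,\mathbb{E}\{L^i_f(\mathbf{H})\}\sqrt{N_R}$ is also increasing in $\mu_R$ and vanishes at $0$, the left-hand side is increasing in $\mu_R$ near $0$ and tends to $0$, so there is $\mu_{R,i}^*>0$ for which the inequality is valid on all of $[0,\mu_{R,i}^*]$; I then set $\mu_R^*\triangleq\min_i \mu_{R,i}^*$. Combining, $(\mathbf{x}^*,\bm{\theta}^*)$ satisfies every constraint of \eqref{eq:10} strictly for all $0\le\mu_S\le\mu_S^*$ and $0\le\mu_R\le\mu_R^*$.

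The only genuinely delicate point is the service constraint: one must handle the smoothing error and the deliberately added slack $\texttt{S}(\mu_R)$ \emph{jointly}, and argue that strict feasibility persists over the entire interval $[0,\mu_R^*]$ rather than merely at its endpoint --- and this is exactly where the monotonicity part of Assumption~\ref{assumption2} is used, making it enough to verify the inequality at $\mu_R^*$. Everything else reduces to direct substitution into the uniform estimates \eqref{eq:14} and \eqref{eq:19}.
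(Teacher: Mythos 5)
Your proposal is correct and is essentially the argument the paper intends: the paper's proof simply defers to the analogous Theorem 6 of \cite{kalogerias2020model}, whose reasoning is exactly your substitution of the strictly feasible point into the uniform bounds \eqref{eq:14} and \eqref{eq:19} (with $N_R$ and $\bm{c}_R$ replacing the parameter-space quantities) together with Assumption \ref{assumption2} to drive $\texttt{S}(\mu_R)$ plus the smoothing error below the feasibility margins $\bm{\varepsilon}_S,\bm{\varepsilon}_R$. No gaps; your handling of the decoupled choice of $\mu_S^*$ and $\mu_R^*$ and of the slack term matches the intended proof.
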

\begin{proof}
    Similar to that of Theorem 6 in \cite{kalogerias2020model}, with key differences of $N_R$ being in place of $N_{\bm{\phi}}$ and the Lipschitz vector $\bm{c}_R$ being on $\mathcal{A}$ rather than $\Theta$.
\end{proof}
\begin{theorem}
\label{theorem:6}
Let Assumption \ref{assumption1} be in effect. Then, for every $\mu_R \geq 0$ such that
\begin{align} \label{eq:38}
    \texttt{S}(\mu_R) - \mu_R \bm{c}_R \sqrt{N_R} \geq \mathbf{0}
\end{align}
and for every $\mu_S \geq 0$, every feasible point of \eqref{eq:10} is also feasible for \eqref{eq:1}. Otherwise, negative LHS values in \eqref{eq:38} correspond to the respective levels of maximal constraint violations for \eqref{eq:1}.
\end{theorem}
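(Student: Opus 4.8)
\emph{Proof proposal.} The plan is to fix an arbitrary $(\mathbf{x},\bm{\theta})$ that is feasible for the surrogate \eqref{eq:10} and to verify the three constraint groups of the original program \eqref{eq:1} one at a time. The membership constraint $(\mathbf{x},\bm{\theta})\in\mathcal{X}\times\Theta$ is shared verbatim by the two problems, so there is nothing to check. For the concave constraints, I would invoke the underestimation property of Gaussian smoothing of concave functions --- the same fact used immediately after \eqref{eq:10}, cf. \cite{nesterov2017random}, and recorded in Lemma \ref{lemma2} under condition \textbf{C1} --- namely $g^i_{\mu_S}(\mathbf{x})\le g^i(\mathbf{x})$ componentwise for every $\mu_S\ge 0$ (with equality at $\mu_S=0$). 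Feasibility of \eqref{eq:10} gives $\mathbf{g}_{\mu_S}(\mathbf{x})\ge\mathbf{0}$, hence $\mathbf{g}(\mathbf{x})\ge\mathbf{g}_{\mu_S}(\mathbf{x})\ge\mathbf{0}$; note this step is independent of $\mu_R$ and of the particular value of $\mu_S$.

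The core of the argument is the ergodic service constraint, which is where \eqref{eq:38} enters. Here I would read the uniform bound \eqref{eq:19} of Lemma \ref{lemma3} as a one-sided, componentwise inequality: for every $i\in\mathbb{N}^+_{N_S}$ and $\mu_R>0$,
\begin{equation*}
\bar{f}^{\bm{\phi},i}_{\mu_R}(\bm{\theta})-\bar{f}^{\bm{\phi},i}(\bm{\theta})\;\le\;\mu_R\,\mathbb{E}\{L^i_f(\mathbf{H})\}\sqrt{N_R}\;=\;\mu_R\,(\bm{c}_R)_i\sqrt{N_R},
\end{equation*}
with $\bar{f}^{\bm{\phi},i}(\bm{\theta})=\mathbb{E}\{f^i(\bm{\phi}(\mathbf{H},\bm{\theta}),\mathbf{H})\}$ and $\bm{c}_R$ as in \eqref{eq:32}. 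Combining this, in vector form, with the surrogate service constraint $\mathbf{x}+\texttt{S}(\mu_R)\le\bar{\mathbf{f}}^{\bm{\phi}}_{\mu_R}(\bm{\theta})$ yields
\begin{equation*}
\mathbf{x}\;\le\;\bar{\mathbf{f}}^{\bm{\phi}}_{\mu_R}(\bm{\theta})-\texttt{S}(\mu_R)\;\le\;\mathbb{E}\{\mathbf{f}(\bm{\phi}(\mathbf{H},\bm{\theta}),\mathbf{H})\}+\big(\mu_R\bm{c}_R\sqrt{N_R}-\texttt{S}(\mu_R)\big).
\end{equation*}
Under hypothesis \eqref{eq:38} the parenthesized correction is $\le\mathbf{0}$, so $\mathbf{x}\le\mathbb{E}\{\mathbf{f}(\bm{\phi}(\mathbf{H},\bm{\theta}),\mathbf{H})\}$, which is precisely the service constraint of \eqref{eq:1}; this completes the feasibility transfer. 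The boundary case $\mu_R=0$ is immediate, since then $\Pi_{\mathcal{A}}$ is the identity on the range of $\bm{\phi}$, so $\bar{\mathbf{f}}^{\bm{\phi}}_{0}\equiv\bar{\mathbf{f}}^{\bm{\phi}}$, the surrogate service constraint is at least as tight as the original one (as $\texttt{S}(0)\ge\mathbf{0}$), and \eqref{eq:38} reduces to $\texttt{S}(0)\ge\mathbf{0}$, which always holds.

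For the final claim I would simply retain the correction term: if some entry of $\texttt{S}(\mu_R)-\mu_R\bm{c}_R\sqrt{N_R}$ is negative, the displayed chain still gives $x^i\le\mathbb{E}\{f^i(\bm{\phi}(\mathbf{H},\bm{\theta}),\mathbf{H})\}-\big(\texttt{S}^i(\mu_R)-\mu_R(\bm{c}_R)_i\sqrt{N_R}\big)$, so the $i$-th original service constraint can be violated by no more than the magnitude of that negative entry; because the $\mathbf{g}$- and membership constraints are untouched, this is the only constraint of \eqref{eq:1} at risk, and the negative part of $\texttt{S}(\mu_R)-\mu_R\bm{c}_R\sqrt{N_R}$ thus gives the respective maximal levels of constraint violation. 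The only delicate point in the whole argument is interpretive rather than computational: asserting that this is the \emph{maximal} violation requires noting that the bound \eqref{eq:19} is uniform over $\bm{\theta}$ and tight in the worst case, so the quantity cannot be improved in general; everything else is a direct consequence of Lemmata \ref{lemma2} and \ref{lemma3}.
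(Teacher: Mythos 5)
Your proposal is correct and follows essentially the paper's own route: the paper's proof simply invokes Theorem 7 of \cite{kalogerias2020model} with $N_R$ in place of $N_{\bm{\phi}}$ and the Lipschitz vector $\bm{c}_R$ taken on $\mathcal{A}$ rather than $\Theta$, and that argument is exactly the chain you spell out --- the underestimation property of Lemma \ref{lemma2} disposes of the $\mathbf{g}$-constraints, while the one-sided reading of \eqref{eq:19} combined with the surrogate service constraint and \eqref{eq:38} transfers the ergodic constraint. Your closing concern about tightness of \eqref{eq:19} is unnecessary, since the ``maximal violation'' claim only asserts the upper bound on the violation that your displayed inequality already delivers.
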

\begin{proof}
Similar to that of Theorem 7 in \cite{kalogerias2020model}, with key differences of $N_R$ being in place of $N_{\bm{\phi}}$ and the Lipschitz vector $\bm{c}_R$ being on $\mathcal{A}$ rather than $\Theta$.
\end{proof}

From the statements of Theorems \ref{theorem:5} and \ref{theorem:6}, we see that both of these can hold simultaneously. These, however, are different from the respective results in \cite{kalogerias2020model}, since that the slack $\texttt{S}(\mu_R)$ deals with perturbations in the action space $\mathcal{A}$ rather than the parameter space $\Theta$, which is much more flexible in terms of implementation. 

%%-------%%
%%-------%%
\section{Model-Free Primal-Dual Learning in Resource Allocation Spaces}
\label{sec:algo}
To efficiently tackle constrained problem \eqref{eq:1} within a fully model-free setting, the idea is to exploit the zeroth-order representations of the smoothed versions of the functions involved in \eqref{eq:1}, as introduced in Section \ref{Zeroth}.
%this   stated in Section \ref{sec:prob}
To do this, we first define the \textit{Lagrangian} of the smoothed surrogate \eqref{eq:10} as
\begin{align} \label{eq:42}
    \mathcal{L}_{\phi,\mu}(\bm{\theta},\mathbf{x},\bm{\lambda}_S,\bm{\lambda}_R)\triangleq\, &g^o_{\mu_S}(\mathbf{x}) + \bm{\lambda}_S^T \mathbf{g}_{\mu_S}(\mathbf{x}) \nonumber \\ 
    & + \bm{\lambda}_R^T [\bar{\mathbf{f}}^{\bm{\phi}}_{\mu_R}(\bm{\theta}) - \mathbf{x} - \texttt{S}(\mu_R)].
\end{align}
Then, driven by both classical and state-of-the-art approaches, see, e.g., \cite{ribeiro2012optimal, eisen2019learning, kalogerias2020model, wang2020drl}, we are interested in the minimax problem
\begin{align}
    \begin{aligned} \label{eq:43}
    & \underset{\bm{\lambda}_S,\bm{\lambda}_R}{\text{minimize}}  && \underset{(\mathbf{x},\bm{\theta}) \in \mathcal{X} \times \Theta}{\text{sup}} \mathcal{L}_{\phi,\mu}(\mathbf{x},\bm{\theta},\bm{\lambda}_S,\bm{\lambda}_R)  \\
    & \text{subject to} && \bm{\lambda}_S \geq \mathbf{0},\bm{\lambda}_R \geq \mathbf{0}
\end{aligned}.
\end{align}
% -----ALGORITHM 1---------%
\SetInd{0.25em}{0em}
\begin{algorithm}[t]
%\caption{Model-Free Primal-Dual Learning via Action-Space Exploration}
\caption{PD-ZDPG+}
\label{algorithm1}
\SetAlgoNoLine
Initialize $\mathbf{x}^0$, $\bm{\theta}^0$, $\bm{\lambda}_S^0$, $\bm{\lambda}_R^0$, $\alpha_{\mathbf{x}}^0$, $\alpha_{\bm{\theta}}^0$, $\alpha_{\bm{\lambda}_S}^0$, $\alpha_{\bm{\lambda}_R}^0$,  $\mu_S$, $\mu_R$. \\
\For{$k=0,N{-}1$}{
    Sample $\bm{U}_S^{k+1}$ and $\bm{U}_R^{k+1}$, and measure $\mathbf{H}^{k+1}$.\\
    Evaluate $g^o(\mathbf{x}^k)$, $\mathbf{g}(\mathbf{x}^k)$, $g^o(\Pi_{\mathcal{X}}\{\mathbf{x}^k {+} \mu_S \bm{U}_S^{k{+}1}\})$ and $\mathbf{g}(\Pi_{\mathcal{X}}\{\mathbf{x}^k {+} \mu_S \bm{U}_S^{k+1}\})$.\\
    Probe the system for  $\mathbf{f}(\bm{\phi}(\mathbf{H}^{k+1},\bm{\theta}^k), \mathbf{H}^{k+1})$ and $\mathbf{f}(\Pi_{\mathcal{A}}\{\bm{\phi}(\mathbf{H}^{k+1},\bm{\theta}^k) {+} \mu_R\bm{U}_R^{k+1}\}, \mathbf{H}^{k+1})$.\\
    Update $\mathbf{x}^{k+1}$ and $\bm{\theta}^{k+1}$ using \eqref{eq:48} and \eqref{eq:49}.\\
    Evaluate  $\mathbf{g}(\Pi_{\mathcal{X}}\{\mathbf{x}^{k{+}1} {+} \mu_S \bm{U}_S^{k+1}\})$ and probe the system for $\mathbf{f}(\Pi_{\mathcal{A}}\{\bm{\phi}(\mathbf{H}^{k{+}1}{,}\bm{\theta}^{k{+}1}) {+} \mu_R\bm{U}_R^{k{+}1}\}, \mathbf{H}^{k{+}1})$.\\
    Update $\bm{\lambda}_S^{k+1}$ and $\bm{\lambda}_R^{k+1}$ using \eqref{eq:50} and \eqref{eq:51}.
}
\end{algorithm}
We now develop our proposed zeroth-order primal-dual learning algorithm to solve \eqref{eq:43}. By Lemma \ref{lemma2} and Theorem \ref{theorem:4}, gradients of $g^o_{\mu_S}$, $\mathbf{g}_{\mu_S}$ and $\bar{\mathbf{f}}^{\bm{\phi}}_{\mu_R}$ are given by expectation functions in \eqref{eq:18} and \eqref{eq:22}. Given standard Gaussian i.i.d. sequences $\{\bm{U}_S^k\}_{k\in\mathbb{N}^+}$, $\{\bm{U}_R^k\}_{k\in\mathbb{N}^+}$, and a mutually independent channel state observation sequence $\{\mathbf{H}^k\}_{k\in\mathbb{N}^+}$, we define parameter update rules employing stochastic approximation as
\SetInd{0.25em}{0.1em}
\begin{align}
    \label{eq:48}
    \mathbf{x}^{k+1} & {\equiv} {\Pi}_\mathcal{X} \{ \mathbf{x}^{k} {+} \alpha_{\mathbf{x}}^k {\circ} (\Delta^i_g(\mathbf{x}^k, \mu_S, \bm{U}_S^{k+1})\bm{U}_S^{k+1}  \nonumber \\
    &  \quad\quad\quad\quad {+} \bm{U}_S^{k+1} \bm{\Delta}_\mathbf{g}(\mathbf{x}^k, \mu_S, \bm{U}_S^{k+1})^T \bm{\lambda}^k_S {-}\bm{\lambda}^k_R) \}, \\
    \label{eq:49}
    \bm{\theta}^{k+1} & {\equiv} {\Pi}_\Theta \{ \bm{\theta}^{k} {+} \alpha_{\bm{\theta}}^k {\circ} (  \nabla_{\bm{\theta}}\bm{\phi}(\mathbf{H}^{k+1},\bm{\theta}^k)^T \bm{U}_R^{k+1}   \nonumber\\
    &  \hspace{0.75in} \times\bm{\Delta}_\mathbf{f}(\bm{\theta}^k{,}\mu_R{,}\bm{U}_R^{k+1}{,}\mathbf{H}^{k+1})^T \bm{\lambda}^k_R )\}, \\
    \label{eq:50}
    \bm{\lambda}^{k+1}_S & {\equiv} [\bm{\lambda}_S^{k} {-} \alpha_{\bm{\lambda}_S}^k{\circ}\mathbf{g}(\Pi_{\mathcal{X}}\{\mathbf{x}^{k+1} {+} \mu_S \bm{U}_S^{k+1}\})]_+ \:\: \text{and}\\
    \label{eq:51}
    \bm{\lambda}^{k+1}_R & {\equiv} [\bm{\lambda}_R^{k} {-} \alpha_{\bm{\lambda}_R}^k{\circ}(\mathbf{f}(\Pi_{\mathcal{A}}\{\bm{\phi}(\mathbf{H}^{k+1},\bm{\theta}^{k+1}) {+} \mu_R\bm{U}_R^{k+1}\}, \mathbf{H}^{k+1}) \nonumber\\ 
    & \hspace{1.65in} {-} \mathbf{x}^{k+1} {-} \texttt{S}(\mu_R))]_+,
\end{align}
where, dropping dependencies, the vectors of finite differences $\bm{\Delta}_\mathbf{g} \in \mathbb{R}^{N_\mathbf{g}}$ and $\bm{\Delta}_\mathbf{f} \in \mathbb{R}^{N_{S}}$ are defined as
\begin{align} \label{eq:52}
    \bm{\Delta}_\mathbf{g} \triangleq \left[ \Delta^1_g \ldots \Delta^{N_\mathbf{g}}_g \right]^T \quad \text{and} \quad \bm{\Delta}_\mathbf{f} \triangleq \left[ \Delta^1_f \ldots \Delta^{N_{S}}_f \right]^T
\end{align}
respectively. A complete description of our model-free primal-dual method is provided in Algorithm \ref{algorithm1}.

As problem \eqref{eq:10} is non-convex (due to $\bar{\mathbf{f}}^\phi_{\mu_R}(\bm{\theta})$), it is difficult to guarantee convergence of Algorithm \ref{algorithm1} theoretically. Nevertheless, it is true that if the algorithm converges, then it discovers a feasible (though possibly suboptimal) solution of the problem. In the next section, we evaluate the performance of Algorithm \ref{algorithm1} empirically instead, on two indicative examples.
%%---------------%%
%----------FIGURE 1--------%
\begin{figure}[t]
  \centering
  \centerline{\includegraphics[width=3.4in]{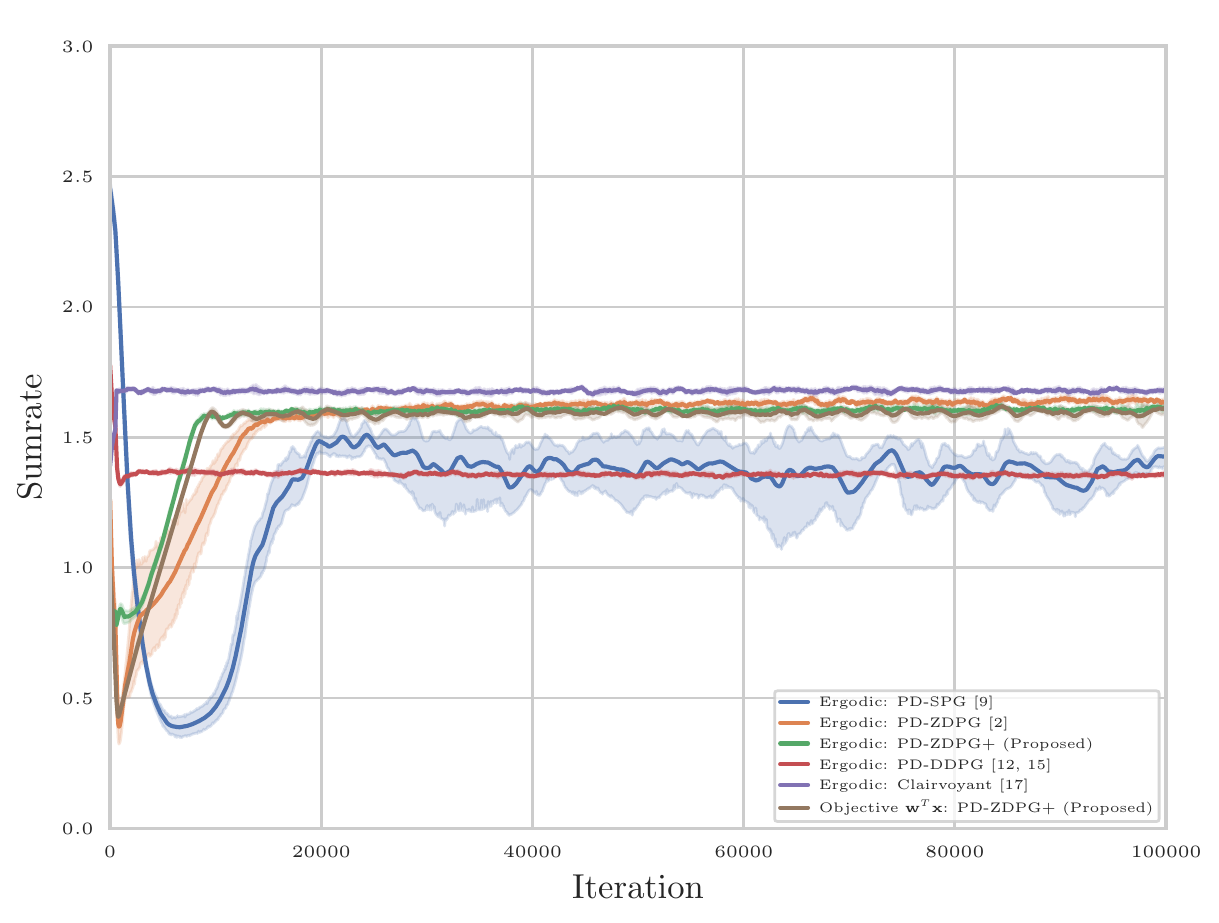}}
%  \vspace{2.0cm}
\caption{Sumrates (\textit{in nats per unit of time}) achieved by the proposed PD-ZDPG+, PD-SPG \cite{eisen2019learning}, PD-ZDPG \cite{kalogerias2020model}, PD-DDPG \cite{wang2020drl, lillicrap2015continuous} and the clairvoyant policy \cite{wang2010stochastic} for the AWGN channel case.}
\label{fig:awgn_1}
\end{figure}

\vspace{-4bp}
\section{Simulations}
\label{sec:sim}
\vspace{-4bp}

We consider two basic wireless models, an Additive White Gaussian Noise (AWGN) channel, and a Multiple Access Inference (MAI) channel. All experiments that follow were rerun five times and plotted in terms of mean performances and bootstrap confidence bands without any \textit{cherry-picking}. 

For the AWGN channel case, we consider a scenario where multiple users are given dedicated channels to communicate, without interference. The goal is to achieve optimal power allocation given an average total power budget $p_{max}$. This problem can be formally stated as
\begin{align} 
\begin{aligned} \label{eq:53}
    \underset{x^i,\bm{\theta}^i}{\text{maximize}} & \quad \sum_{i} w^i x^i \\
    \text{subject to} & \quad x^i \leq \mathbb{E}\left\{\log\left( 1 + \frac{H^i\phi^i(H^i, \bm{\theta}^i)}{v^i}\right)\right\}  \\
    & \quad \mathbb{E}\left\{ \sum_{i} \phi^i(H^i, \bm{\theta}^i) \right\} \leq p_{max}  \\
    & \quad (x^i, \bm{\theta}^i) \in \mathbb{R}_+ \times \mathbb{R}^{N_\phi^i}, \forall i\in {\mathbb{N}_{N_S}^+}
\end{aligned},
\end{align}
where all user weights $w^i$ are positive, randomly generated, and sum to 1, and where the policy is uncoupled among users. This is due to the simplicity of problem \eqref{eq:53}, for which a specially-structured strictly optimal \textit{clairvoyant} solution requiring complete system model information is available \cite{wang2010stochastic}; this also defines an ultimate performance benchmark under the AWGN setup.

We consider a 10-user case for the AWGN channel problem and set $p_{max}\equiv20$. Also, as in all experiments in this section, noise variance $v^i {\equiv} 1$ and $H^i$ is exponentially distributed with parameter $\lambda {\equiv} 1/2$. We then compare the proposed method (PD-ZDPG+) with other primal-dual methods including the stochastic (actor-only) policy gradient (PG) method of \cite{eisen2019learning} (PD-SPG), the deterministic zeroth order actor-only PG method of \cite{kalogerias2020model} (PD-ZDPG), and a customized primal-dual version of a deep deterministic actor-critic policy gradient method \cite{lillicrap2015continuous} (PD-DDPG). 

%-----FIFURE 2 ------%%

\begin{figure}[t]
  \centering
  \centerline{\includegraphics[width=3.5in]{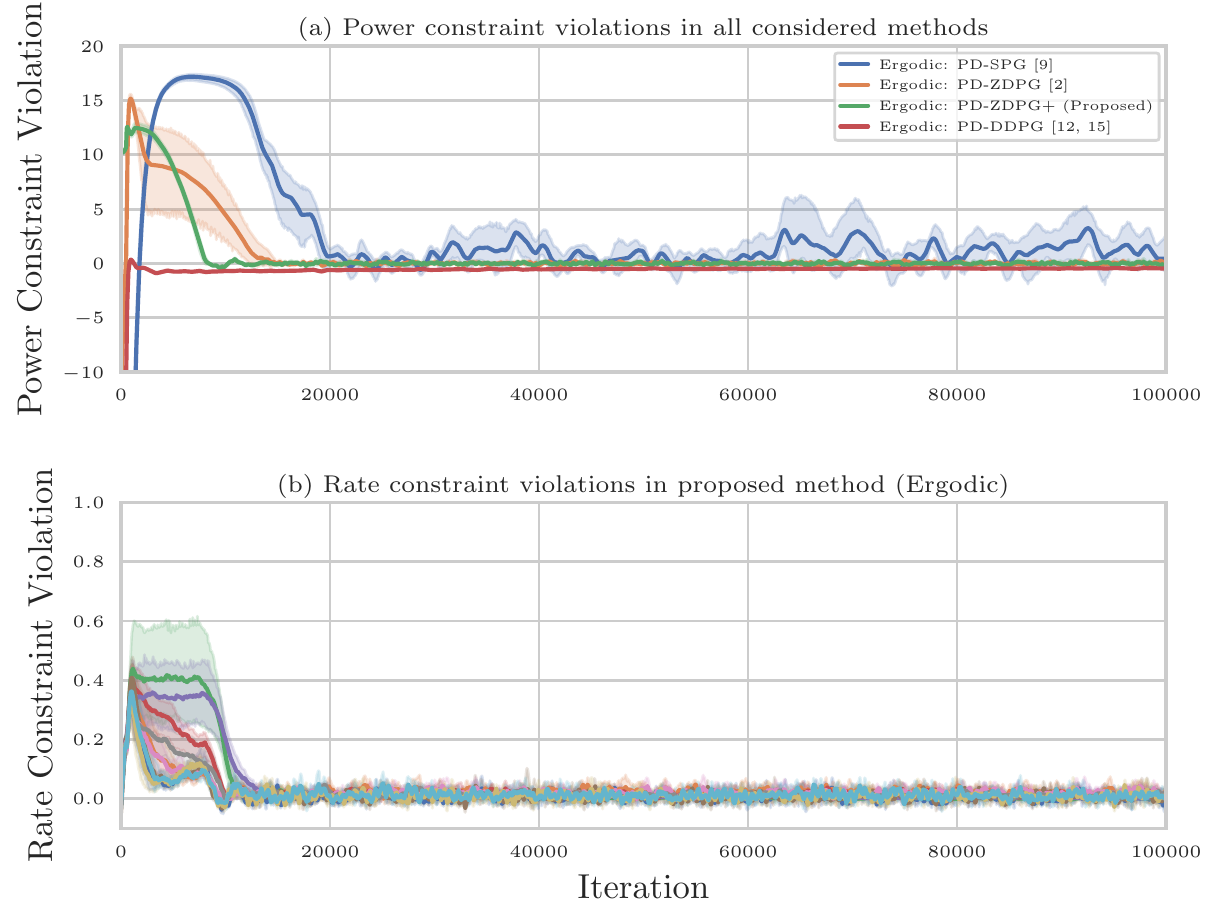}}
%  \vspace{2.0cm}
\caption{(a) Power constraint violation exhibited by the proposed PD-ZDPG+, PD-SPG \cite{eisen2019learning}, PD-ZDPG \cite{kalogerias2020model} and PD-DDPG \cite{wang2020drl, lillicrap2015continuous}, for the AWGN channel case.(b) Rate constraint violations of PD-ZDPG+.}
\label{fig:awgn_2}
\end{figure}

In all deterministic policy gradient methods, each policy features a ReLU-activated three-layer single-input single-$p_{max}$-sigmoid-output feed-forward DNN with eight and four neurons in the hidden layers (i.e., ten $1{-}8{-}4{-}1$ DNNs). For the stochastic policies, we define the same networks, but with two sigmoid outputs scaled by $p_{max}$ and $\sqrt{p_{max}}$, corresponding to the
mean and variance of a truncated normal distribution from which actions are sampled \cite{eisen2019learning}. For the global \textit{critic} in PD-DDPG, we consider a ReLU-activated three-layer feed-forward DNN with twenty and forty neurons in the hidden layers with all parameters initialized at $10^{-6}$. The inputs to the \textit{critic} are instantaneous channel values, whereas all policy action values are concatenated with the output of the first hidden as input to the second hidden layers. All actor parameters $\bm{\theta}$ are initialized to $0$'s while all metrics $\mathbf{x}$ and dual variables are initialized to $1$'s.

Given that the objective description is known, we smooth only the constraints of the problem (cf. \eqref{eq:10}). The exact learning rates for all the methods can be found in Table \ref{table:awgn_1}. For PD-SPG, we use the Adam optimizer with $\beta_1$ and $\beta_2$ set to $0.9$ and $0.999$ respectively with batch size of 32 where as all other methods use SGD optimization. We show the convergence of all methods over $10^5$ iterations in Figure \ref{fig:awgn_1} and report that the proposed method converges both faster and to a near-optimal solution as compared with other methods. We also show a similar behavior for constraint violations in Figure \ref{fig:awgn_2}.

\begin{table}[ht]
\begin{center}
\begin{tabular}{ p{1.18in}|p{0.25in}p{0.6in}p{0.25in}p{0.25in}}
 \hline
 Method & $\alpha_{\mathbf{x}}^k$ & $\alpha_{\bm{\theta}}^k$ & ${\alpha_{\bm{\lambda}_R}^k}$ & $\alpha_{\bm{\lambda}_R}^k*$\\
 \hline
 PD-SPG \cite{eisen2019learning} &0.01&0.01&0.0001&0.08\\
 PD-ZDPG \cite{kalogerias2020model} &0.001&0.0008&0.008&0.0001\\
 PD-ZDPG+ \hspace{-0.02in}(Proposed) &0.001&0.02&0.008&0.0001\\
 PD-DDPG \cite{wang2020drl, lillicrap2015continuous} &0.001&0.002/0.001&0.01&0.0001\\
 \hline
\end{tabular}
\end{center}
\caption{Learning rates for the AWGN channel ($*$ Power constraint)}
\label{table:awgn_1}
\end{table}

Optimizing stochastic policies proves to be a noisy procedure. In fact, these policies not only converge to suboptimal solutions, but are also not desirable as they inject extra noise into the system during implementation, inducing unnecessary statistical variability in performance.  Interestingly, PD-DDPG also converges to the same solution as PD-SPG showing that using a \textit{critic} is indeed a heuristic. As such, we witness the credit assignment problem in terms of the individual policies becoming apparent in PD-DDPG as the critic inaccurately approximates the functions in the constraints. PD-ZDPG+, on the other hand, approaches the globally optimal solution to the AWGN problem at hand rather faithfully, achieving similar performance to PD-ZDPG, which relies on high-dimensional parameter space exploration.

%----------FIGURE 3--------%
\begin{figure}[t]
  \centering
  \centerline{\includegraphics[width=3.5in]{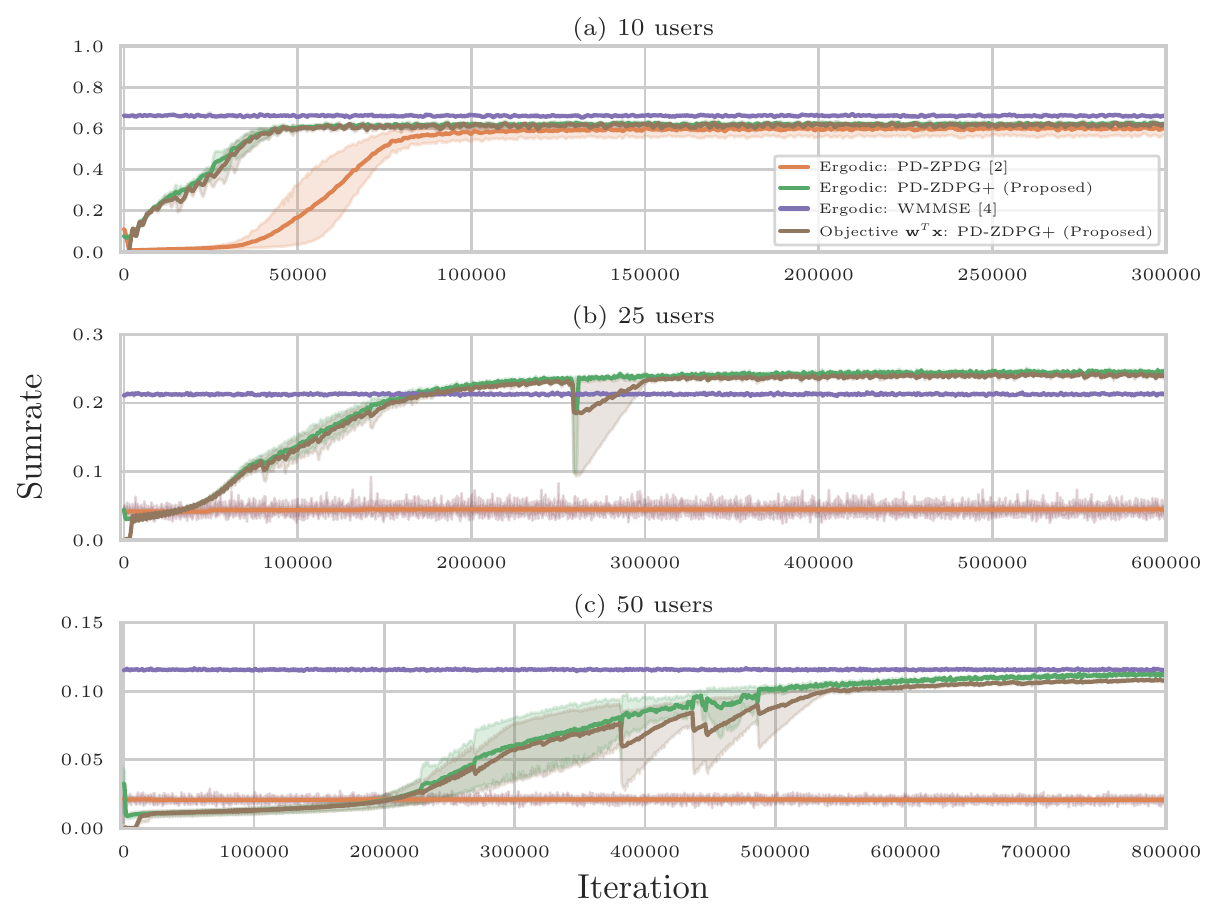}}
%  \vspace{2.0cm}
\caption{Sumrates (\textit{in nats per unit of time}) achieved by the proposed PD-ZDPG+, PD-ZDPG \cite{kalogerias2020model} and the WMMSE policy \cite{shi2011iteratively} for the MAI channel case with (a) 10 users (b) 25 users and (c) 50 users.}
\label{fig:mai_1}
\end{figure}

Comparable performances of PD-ZDPG \cite{kalogerias2020model} and our PD-ZDPG+ on the simple AWGN channel problem shows that our method performs as well as the state-of-the-art. To further compare the scalability of these algorithms, we consider a MAI channel model setting, in which $N_S$ transmitters communicate simultaneously with a central node. Thus, a signal transmitted by each user will interfere with the signals transmitted by all the other users. Again, the task is to optimize power allocation per user given an average power budget of $p_{max}$. In this case, our resource allocation problem is formulated as  
\begin{align} 
\begin{aligned} \label{eq:54}
\noindent \underset{x^i,\bm{\theta}}{\text{maximize}} & \:\sum_{i} w^i x^i \hspace{1in}\\
    \text{subject to} & \: x^i {\leq} \mathbb{E}\left\{{\log}\left( 1 {+} \frac{H^i\phi^i(H{,} \bm{\theta})}{v^i {+} \sum_{i{\neq}j} H^j\phi^j(H{,} \bm{\theta})}\right)\right\} \\
    & \: \mathbb{E}\left\{ \sum_{i} \phi^i(H, \bm{\theta}) \right\} \leq p_{max}  \\
    & \: (x^i, \bm{\theta}) \in \mathbb{R}_+ \times \mathbb{R}^{N_\phi}, \forall i\in {\mathbb{N}_{N_S}^+}
\end{aligned}.
\end{align}

In the following, we consider the MAI problem for 10, 25 and 50 users, where, again, the total allocated power budget $p_{max} \equiv 20$ and all the user weights $w^i$ are positive, randomly generated, and sum to $1$. For our implementations, we define global policies via ReLU activated three layer neural networks having 64 and 32 neurons in the hidden layers respectively (a single $\#{-}64{-}32{-}\#$ DNN, where ``$\#$'' is the number of users). All policy parameters $\bm{\theta}$ and all metrics $\mathbf{x}$ are initialized to $0$'s while all dual variables are initialized to $1$'s.

\begin{table}[ht]
\begin{center}
\begin{tabular}{ p{1.18in}|p{0.25in}p{0.45in}p{0.25in}p{0.25in}}
 \hline
 Method & $\alpha_{\mathbf{x}}^k$ & $\alpha_{\bm{\theta}}^k$ & ${\alpha_{\bm{\lambda}_R}^k}$ & $\alpha_{\bm{\lambda}_R}^k*$\\
 \hline
 PD-ZDPG \cite{kalogerias2020model} &0.001&0.00005&0.004&0.0001\\
 PD-ZDPG+ \hspace{-0.02in}(Proposed) &0.001&0.04&0.008&0.0001\\
 \hline
\end{tabular}
\end{center}
\caption{Learning rates for the MAI channel ($*$ Power constraint)}
\label{table:mai_1}
\end{table}

We compare these methods with the well-known WMMSE policy \cite{shi2011iteratively} (at saturation) as a benchmark upper bound. We use the same parameterization and learning rates in all experiments as given in Table \ref{table:mai_1}. PD-ZDPG consistently converged to the optimal solution in all cases (even occasionally outperforming WMMSE), whereas for the 25- and 50-user cases, PD-ZPDG did not converge and we were unable to find effective learning rates to make the method convergent. Finally, we observed that PD-ZDPG+ is also much faster in terms of training times (in Python), as shown in Figure \ref{fig:mai_2}. These comparisons are justified according to the documentation of Python's PyTorch library, which we have used in our experiments.

\section{Conclusion}
\label{sec:conclude}

In this work, we have proposed a new primal-dual method for learning resource allocations in wireless systems by exploiting zeroth-order deterministic policy gradients via low-dimensional action space exploration. Our method works with powerful policy parameterizations such as DNNs, and outperforms the current state-of-the-art, both in terms of optimal convergence and scalability. This is due to the fact the our method optimizes deterministic policies, is completely model-free, and is not limited, in terms of scalability, by the dimensionality of employed policy parameterizations. Although we have only tested feed-forward neural networks in our numerics, PD-ZDPG+ works equally fine with other deep learning models like convolutional and recurrent neural networks, as well as architectures which exploit intrinsic network structure, such as random edge graph neural networks. Therefore, PD-ZDPG+ can be applied to a wide range of constrained resource allocation problems.  In the future, we would like to evaluate the behavior of PD-ZDPG+ for such more elaborate policy parameterizations. We are also interested to see how our proposed method extends and performs on problems which are outside the scope of wireless systems resource allocation, such as problems arising in general purpose constrained RL. 
%----------FIGURE 4--------%
\begin{figure}[H]
  \centering
  \centerline{\includegraphics[width=3.5in]{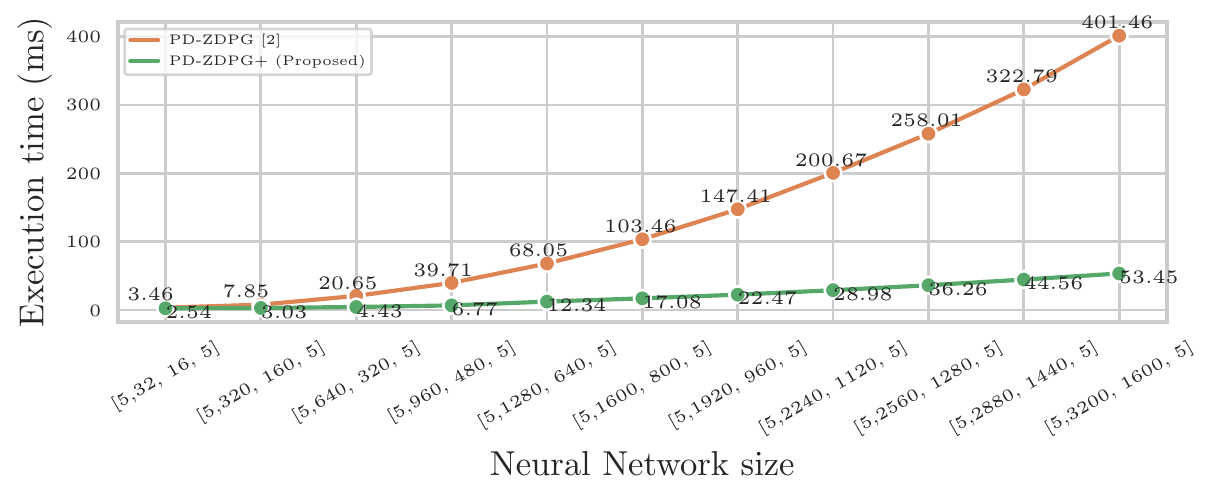}}
%  \vspace{2.0cm}
\caption{Best training execution times per iteration (in ms) of the proposed PD-ZDPG+ and PD-ZDPG \cite{kalogerias2020model} for the MAI channel case with 5 users and multiple neural network sizes, using 2 virtual Intel Xeon CPUs @ 2.30GHz, 13GB of RAM, and an 8 core TPU v3.}
\label{fig:mai_2}
\end{figure}

\bibliographystyle{IEEEbib}
\bibliography{PD-ZDPG+}
\end{document}